\def\@captype{figure}
\numberwithin{equation}{section}
\newtheorem{theorem}{Theorem}[section]
\newtheorem{lemma}[theorem]{Lemma}
\newtheorem{remark}[theorem]{Remark}
\newtheorem{definition}[theorem]{Definition}
\newtheorem{conjecture}[theorem]{Conjecture}
\newtheorem{approximation}[theorem]{Approximation}
\newcommand{\R}{\mathbb{R}}
\newcommand{\N}{\mathbb{N}}
\newcommand{\eee}{\mathrm{e}}
\newcommand{\ddd}{\mathrm{d}}
\newcommand{\cT}{\mathcal{T}}
\newcommand{\im}{\mathrm{i}}
\begin{document}


\title{Synchronization of phase oscillators\\ 
on the hierarchical lattice}

\author{
\renewcommand{\thefootnote}{\arabic{footnote}}
D.\ Garlaschelli 
\footnotemark[1]
\\
\renewcommand{\thefootnote}{\arabic{footnote}}
F.\ den Hollander
\footnotemark[2]
\\
\renewcommand{\thefootnote}{\arabic{footnote}}
J.\ Meylahn
\footnotemark[2]
\\
\renewcommand{\thefootnote}{\arabic{footnote}}
B.\ Zeegers
\footnotemark[2]
}

\footnotetext[1]{
Lorentz Institute for Theoretical Physics, Leiden University, P.O.\  Box 9504, 
2300 RA Leiden, The Netherlands.
}
\footnotetext[2]{
Mathematical Institute, Leiden University, P.O.\ Box 9512,
2300 RA Leiden, The Netherlands.
}

\date{\today}

\maketitle


\begin{abstract}
Synchronization of neurons forming a network with a hierarchical structure is essential for 
the brain to be able to function optimally. In this paper we study synchronization of phase 
oscillators on the most basic example of such a network, namely, the hierarchical lattice. 
Each site of the lattice carries an oscillator that is subject to noise. Pairs of oscillators interact 
with each other at a strength that depends on their hierarchical distance, modulated by a 
sequence of interaction parameters. We look at block averages of the oscillators on successive 
hierarchical scales, which we think of as block communities. In the limit as the number of 
oscillators per community tends to infinity, referred to as the hierarchical mean-field limit, 
we find a separation of time scales, i.e., each block community behaves like a single oscillator 
evolving on its own time scale. We argue that the evolution of the block communities is given 
by a renormalized mean-field noisy Kuramoto equation, with a synchronization level that 
depends on the hierarchical scale of the block community. We find three universality 
classes for the synchronization levels on successive hierarchical scales, characterized 
in terms of the sequence of interaction parameters. 

What makes our model specifically challenging is the non-linearity of the interaction between
the oscillators. The main results of our paper therefore come in three parts: (I) a \emph{conjecture} 
about the nature of the renormalisation transformation connecting successive hierarchical 
scales; (II) a \emph{truncation approximation} that leads to a simplified renormalization transformation; 
(III) a \emph{rigorous analysis} of the simplified renormalization transformation. We provide 
compelling arguments in support of (I) and (II), but a full verification remains an open problem. 

\medskip\noindent
{\it Mathematics Subject Classification 2010.} 
60K35, 
60K37, 
82B20, 
82C27, 
82C28. 

\medskip\noindent
{\it Key words and phrases.} Hierarchical lattice, phase oscillators, noisy Kuramoto model, 
block communities, renormalization, universality classes.

\medskip\noindent
{\it Acknowledgment.} 
DG is  supported by EU-project 317532-MULTIPLEX. FdH and JM are supported by NWO 
Gravitation Grant 024.002.003--NETWORKS. The authors are grateful to G.\ Giacomin for 
critical remarks. 

\end{abstract}

\newpage


\section{Introduction}

The concept of \emph{spontaneous synchronization} is ubiquitous in nature. Single 
oscillators (like flashing fireflies, chirping crickets or spiking brain cells) may rotate 
incoherently, at their own natural frequency, when they are isolated from the population, 
but within the population they adapt their rhythm to that of the other oscillators, acting 
as a system of coupled oscillators. There is no central driving mechanism, yet the 
population reaches a globally synchronized state via mutual local interactions.

The omnipresence of spontaneous synchronization triggered scientists to search for a 
mathematical approach in order to understand the underlying principles. The first steps 
were taken by Winfree~\cite{W67}, \cite{W80}, who recognized that spontaneous 
synchronization should be understood as a threshold phenomenon: if the coupling 
between the oscillators is sufficiently strong, then a macroscopic part of the population 
freezes into synchrony. Although the model proposed by Winfree was too difficult 
to solve analytically, it inspired Kuramoto ~\cite{K75}, \cite{K84} to suggest a 
more mathematically tractable model that captures the same phenomenon. The 
Kuramoto model has since been used successfully to study synchronization in a 
variety of different contexts. By now there is an extended literature, covering aspects 
like phase transition, stability, and effect of disorder (for a review, see Ac\'ebron 
\emph{et al.}~\cite{A05}).

Mathematically, the Kuramoto model still poses many challenges. As long as the 
interaction is \emph{mean-field} (meaning that every oscillator interacts equally 
strongly with every other oscillator), a fairly complete theory has been developed.
However, as soon as the interaction has a \emph{non-trivial geometry}, computations 
become cumbersome. There is a large literature for the Kuramoto model on complex 
networks, where the population is viewed as a random graph whose vertices carry 
the oscillators and whose edges represent the interaction. Numerical and heuristic 
results have been obtained for networks with a small-world, scale-free and/or community 
structure, showing a range of interesting phenomena (for a review, see Arenas
\emph{et al.}~\cite{A08}). Rigorous results are rare. In the present paper we focus 
on one particular network with a community structure, namely, the \emph{hierarchical lattice}. 

The remainder of this paper is organised as follows. Sections~\ref{sec:MFK}--\ref{sec:scaling} 
are devoted to the mean-field noisy Kuramoto model. In Section~\ref{sec:MFK} we recall 
definitions and basic properties. In Section~\ref{sec:MV} we recall the McKean-Vlasov equation, 
which describes the evolution of the probability density for the phase oscillators in the 
\emph{mean-field limit}. In Section~\ref{sec:scaling} we take a closer look at the scaling 
properties of the order parameters towards the mean-field limit. In Section~\ref{sec:HL} 
we define the hierarchical lattice and in Section~\ref{sec:HKM} introduce the noisy Kuramoto 
model on the hierarchical lattice, which involves a sequence of interaction strengths 
$(K_k)_{k\in\N}$ acting on successive hierarchical levels. Section~\ref{sec:results} contains 
our main results, presented in the form of a conjecture, a truncation approximation, and  
rigrorous theorems. These concern the \emph{hierarchical mean-field limit} and show that, 
for each $k\in\N$, the block communities at hierarchical level $k$ behave like the mean-field 
noisy Kuramoto model, with an interaction strength and a noise that depend on $k$ and are 
obtained via a \emph{renormalization transformation} connecting successive hierarchical 
levels. There are three universality classes for $(K_k)_{k\in\N}$, corresponding to sudden 
loss of synchronization at a finite hierarchical level, gradual loss of synchronization as the 
hierarchical level tends to infinity, and no loss of synchronization. The renormalization 
transformation allows us to describe these classes in some detail. In Section~\ref{sec:thmscalphase} 
we analyse the renormalization scheme, in Section~\ref{sec:thmclasses} we find criteria for 
the universality classes. Appendix~\ref{app:numerics} provides numerical examples and 
computations.


\subsection{Mean-field Kuramoto model}
\label{sec:MFK}

We begin by reviewing the mean-field Kuramoto model. Consider a population of $N\in\N$ 
oscillators, and suppose that the $i^{\text{th}}$ oscillator has a natural frequency 
$\omega_{i}$, such that
\begin{equation}
\label{eq:frequencies}
\begin{aligned}
\blacktriangleright\quad\, 
&\omega_{i},\, i = 1,\ldots,N, \text{ are i.i.d.\ and are drawn from}\\[-.1cm] 
&\text{a common probability distribution } \mu \text{ on } \R.   
\end{aligned}  
\end{equation}
Let the phase of the $i^{\text{th}}$ oscillator at time $t$ be $\theta_{i}(t) \in \R$. 
If the oscillators were not interacting, then we would have the system of 
uncoupled differential equations 
\begin{equation}
\label{eq:Knoint}
\frac{\ddd \theta_{i}(t)}{\ddd t} = \omega_i, \quad  i = 1, \ldots, N.
\end{equation}
Kuramoto~\cite{K75}, \cite{K84} realized that the easiest way to allow for synchronization 
was to let every oscillator interact with every other oscillator according to the sine of their 
phase difference, i.e., to replace \eqref{eq:Knoint} by:
\begin{equation}
\label{eq:Knonoise}
\frac{\ddd \theta_{i}(t)}{\ddd t} = \omega_{i} 
+ \frac{K}{N} \sum_{j=1}^{N} \sin\big[\theta_{j}(t) - \theta_{i}(t)\big],
\quad i = 1, \ldots, N.
\end{equation} 
Here, $K \in (0,\infty)$ is the \emph{interaction strength}, and the factor $\frac{1}{N}$ is included 
to make sure that the total interaction per oscillator stays finite in the thermodynamic limit 
$N\to\infty$. The coupled evolution equations in \eqref{eq:Knonoise} are referred to as the 
\emph{mean-field Kuramoto model}. An illustration of the interaction in this model is given in 
Fig.~\ref{fig:mfkuramoto}. 

\begin{figure}[htbp]
\vspace{0.2cm}
\centering
\includegraphics[scale=0.4]{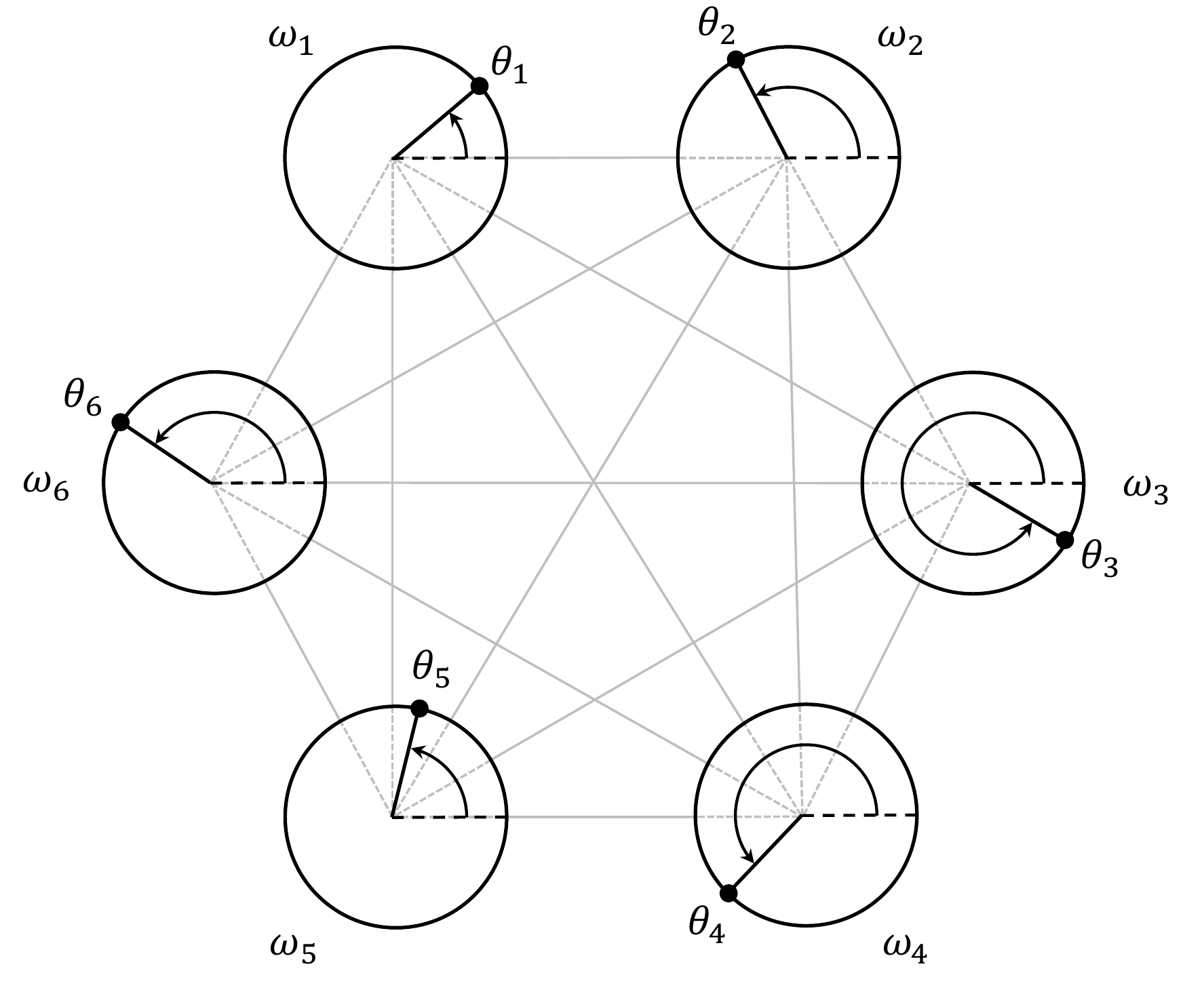}
\caption{Mean-field interaction of $N=6$ oscillators with natural frequencies $\omega_{i}$ 
and phases $\theta_{i}$, $i=1,\ldots,6$, evolving according to \eqref{eq:Knonoise}.}
\label{fig:mfkuramoto}
\end{figure}

\noindent
If noise is added, then \eqref{eq:Knonoise} turns into the \emph{mean-field noisy Kuramoto model}, 
given by
\begin{equation}
\label{eq:Knoise}
\ddd \theta_{i}(t) = \omega_{i}\,\ddd t 
+ \frac{K}{N} \sum_{j=1}^{N} \sin\big[\theta_{j}(t) - \theta_{i}(t)\big]\,\ddd t + D\,\ddd W_i(t), 
\quad i = 1, \ldots, N.
\end{equation} 
Here, $D \in (0,\infty)$ is the \emph{noise strength}, and $(W_i(t))_{t \geq 0}$, $i=1,\ldots,N$, are 
independent standard Brownian motions on $\R$. The coupled evolution equations in 
\eqref{eq:Knoise} are stochastic differential equations in the sense of It\^o (see e.g.\ Karatzas 
and Shreve~\cite{KS98}). As initial condition we take
\begin{equation}
\label{eq:initial}
\begin{aligned}
\blacktriangleright\quad\, 
&\theta_i(0),\, i = 1,\ldots,N, \text{ are i.i.d.\ and are drawn from}\\[-.1cm] 
&\text{a common probability distribution } \rho \text{ on } [0,2\pi).   
\end{aligned}  
\end{equation}

In order to exploit the mean-field nature of \eqref{eq:Knoise}, the complex-valued order 
parameter (with $\im$ the imaginary unit)
\begin{equation}
\label{eq:orderparfiniteN}
r_N(t)\,\eee^{\im\psi_N(t)} = \frac{1}{N} \sum_{j=1}^N \eee^{\im\theta_j(t)} 
\end{equation}
is introduced. In \eqref{eq:orderparfiniteN}, $r_N(t)$ is the \emph{synchronization level} at time
$t$ and takes values in $[0,1]$, while $\psi_N(t)$ is the \emph{average phase} at time $t$ and 
takes values in $[0,2\pi)$. (Note that $\psi_N(t)$ is properly defined only when $r_N(t)>0$.) The 
order parameter $(r,\psi)$ is illustrated in Fig.~\ref{fig:orderparameter} ($r=0$ corresponds to the 
oscillators being completely unsynchronized, $r=1$ to the oscillators being completely synchronized). 

\begin{figure}[htbp]
\centering
\begin{subfigure}[b]{0.4\linewidth}
\centering
\includegraphics[scale=0.25]{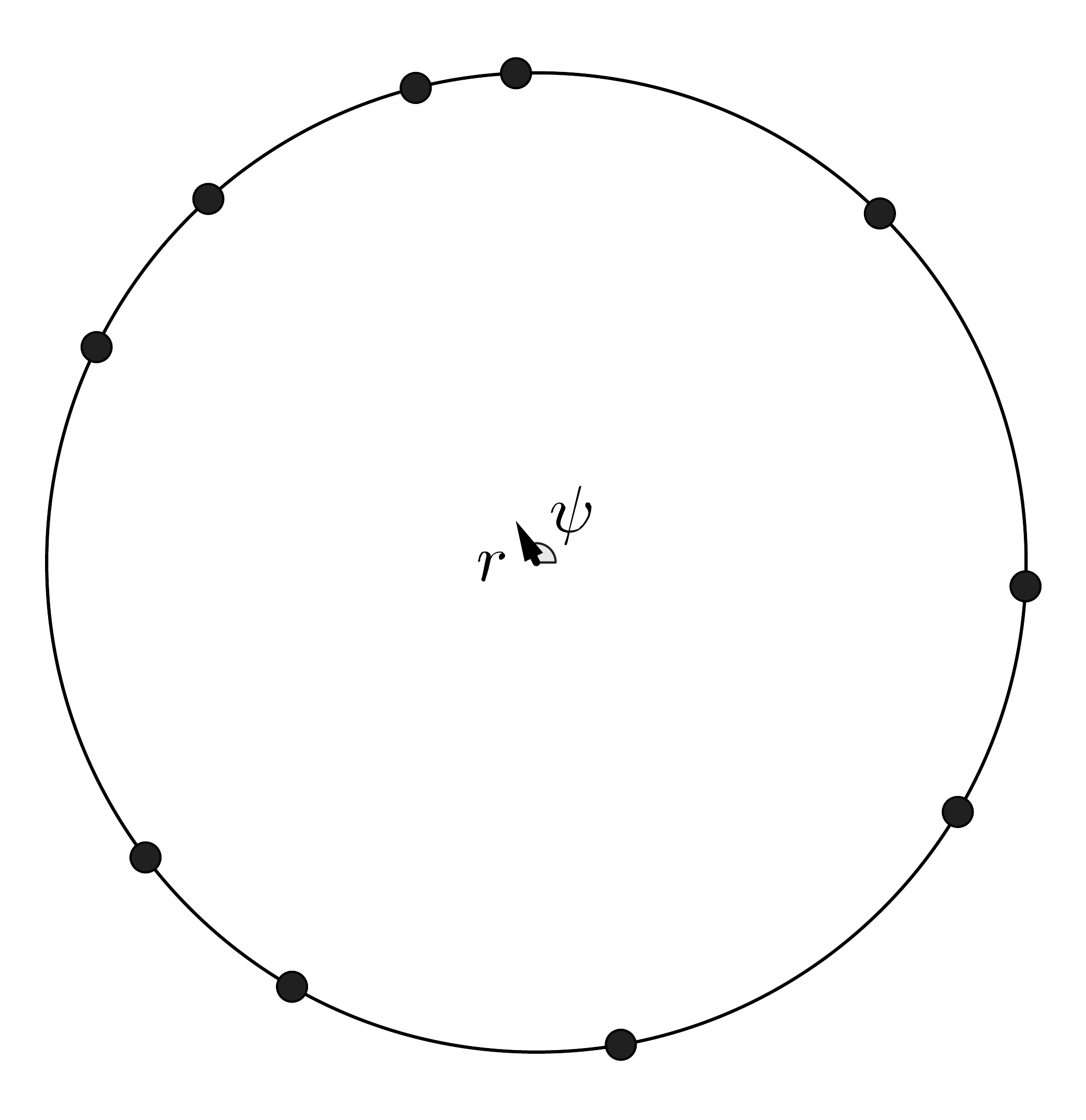}
\caption{$r = 0.095$.}
\end{subfigure}
\begin{subfigure}[b]{0.48\linewidth}
\centering
\includegraphics[scale=0.25]{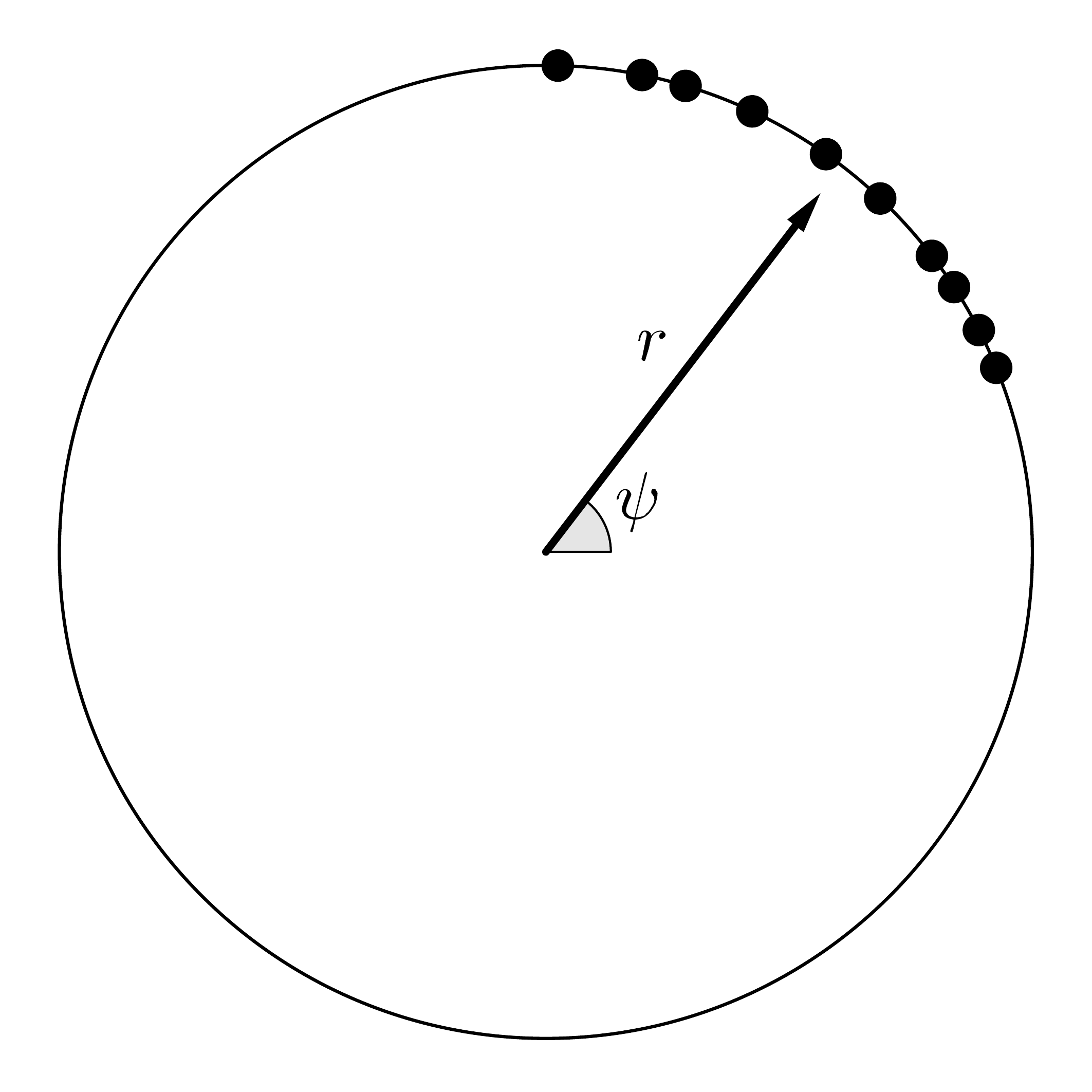}
\caption{$r = 0.929$.}
\end{subfigure}
\caption{Phase distribution of oscillators for two different values of $r$. The arrow represents
the complex number $r\eee^{i\psi}$.}
\label{fig:orderparameter}
\end{figure}

\noindent
By rewriting \eqref{eq:Knoise} in terms of \eqref{eq:orderparfiniteN} as
\begin{equation}
\label{eq:Knoiseorderpar}
\ddd \theta_{i}(t) = \omega_{i}\,\ddd t 
+ K r_N(t) \sin\big[\psi_N(t)-\theta_{i}(t)\big]\,\ddd t + D\,\ddd W_i(t), 
\quad i = 1, \ldots, N,
\end{equation} 
we see that \emph{the oscillators are coupled via the order parameter}, i.e., the phases $\theta_i$ 
are pulled towards $\psi_N$ with a strength proportional to $r_N$. Note that $r_N(t)$ and $\psi_N(t)$
are random variables that depend on $\mu$, $D$ and $\rho$.  

In the \emph{mean-field limit} $N\to\infty$, the system in \eqref{eq:Knoiseorderpar} exhibits 
what is called ``propagation  of chaos'', i.e., the evolution of single oscillators becomes 
\emph{autonomous}. Indeed, let the order parameter associated with $\rho$ in \eqref{eq:initial} 
be the pair $(R,\Phi) \in [0,1] \times [0,2\pi)$ defined by
\begin{equation}
\label{RPhirho}
R\,\eee^{\im\Phi} = \int_0^{2\pi} \rho(\ddd\theta)\,\eee^{\im\theta}.
\end{equation}
Suppose that $R>0$, so that $\Phi$ is properly defined. Suppose further that
\begin{equation}
\label{eq:symuni}
\blacktriangleright\quad 
\text{the disorder distribution } \mu \text{ in \eqref{eq:frequencies} is symmetric}.
\end{equation}
Then, as we will see in Sections~\ref{sec:MV}--\ref{sec:scaling}, the limit as $N\to\infty$ 
of the evolution of a single oscillator, say $\theta_1$, is given by
\begin{equation}
\label{eq:propcha}
\ddd \theta_1(t) = \omega_1\,\ddd t 
+ K r(t) \sin\big[\Phi-\theta_1(t)\big]\,\ddd t + D\,\ddd W_1(t),
\end{equation} 
where $(W_1(t))_{t \geq 0}$ is a standard Brownian motion, and $r(t)$ is driven by a 
\emph{deterministic relaxation equation} such that 
\begin{equation}
\label{eq:rrelax}
r(0) = R, \qquad \lim_{t\to\infty} r(t) = r \text{ for some } r \in [0,1). 
\end{equation}
The parameter $r=r(\mu,D,K)$ will be identified in \eqref{mfKsyn} below (and the convergence 
holds at least when $R$ is close to $r$; see Remark~\ref{rem:stability} below). The evolution in 
\eqref{eq:propcha} is \emph{not closed} because of the presence of $r(t)$, but after a 
\emph{transient period} it converges to the \emph{autonomous} evolution equation
\begin{equation}
\label{eq:propchalt}
\ddd \theta_1(t) = \omega_1\,\ddd t 
+ K r \sin\big[\Phi-\theta_1(t)\big]\,\ddd t + D\,\ddd W_1(t).
\end{equation} 
Without loss of generality, we may \emph{calibrate} $\Phi=0$ by rotating the circle $[0,2\pi)$ 
over $-\Phi$. After that the parameters $R,\Phi$ associated the initial distribution $\rho$ are gone, 
and only $r$ remains as the relevant parameter. It is known (see e.g.\ \eqref{eq:Kcid} below) 
that there exists a \emph{critical threshold} $K_c = K(\mu,D)  \in (0,\infty)$ separating two regimes: 
\begin{itemize}
\item
For $K \in (0,K_c]$ the system relaxes to an \emph{unsynchronized state} ($r = 0$). 
\item
For $K \in (K_c,\infty)$ the system relaxes to a \emph{partially synchronized state} ($r \in (0,1)$), 
at least when $\rho$ in \eqref{eq:initial} is chosen such that $R$ is close to $r$ (see
Remark~\ref{rem:stability} below).  
\end{itemize}
See Strogatz~\cite{S00} and Lu\c{c}on~\cite{L12} for overviews.


\subsection{McKean-Vlasov equation}
\label{sec:MV}

For the system in \eqref{eq:Knoise}, Sakaguchi~\cite{S88} showed that in the limit as $N\to\infty$,
the probability density for the phase oscillators and their natural frequencies (with respect to 
$\lambda \times \mu$, with $\lambda$ the Lebesgue measure on $[0,2\pi]$ and $\mu$ the disorder 
measure on $\R$) evolves according to the \emph{McKean-Vlasov equation} 
\begin{equation}
\label{eq:McKean}
\frac{\partial}{\partial t}\,p(t;\theta,\omega)
= -\frac{\partial}{\partial \theta}\,\Big[p(t;\theta,\omega)\,\Big\{\omega 
+ Kr(t)\sin\big[\psi(t) - \theta\big]\Big\}\Big] 
+ \frac{D}{2}\frac{\partial^{2}}{\partial \theta^{2}}\,p(t;\theta,\omega),
\end{equation}  
where
\begin{equation}
\label{eq:orderpar}
r(t)\,\eee^{\im\psi(t)} = \int_{\R} \mu(\ddd\omega) 
\int_{0}^{2\pi} \ddd\theta\,\eee^{\im\theta}\,p(t;\theta,\omega),
\end{equation}
is the continuous counterpart of \eqref{eq:orderparfiniteN}. If $\rho$ has a density, say 
$\rho(\theta)$, then $p(0;\theta,\omega)=\rho(\theta)$ for all $\omega\in\R$.

By \eqref{eq:symuni}, we can again \emph{calibrate} the average phase to be zero, i.e., 
$\psi(t) = \psi(0) = \Phi = 0$, $t \geq 0$, in which case the stationary solutions of 
\eqref{eq:McKean} satisfy
\begin{equation}
\label{eq:McKVstat}
0 = -\frac{\partial}{\partial \theta}\,\big[p(\theta,\omega)\,(\omega - Kr\sin\theta)\big]
+ \frac{D}{2}\frac{\partial^{2}}{\partial \theta^{2}}\,p(\theta,\omega).
\end{equation}
The solutions of \eqref{eq:McKVstat} are of the form
\begin{equation}
\label{eq:stat}
p_\lambda(\theta,\omega) 
= \frac{A_\lambda(\theta,\omega)}{\int_{0}^{2\pi} \ddd\phi\,A_\lambda(\phi,\omega)},
\qquad \lambda = 2Kr/D,
\end{equation}
with
\begin{equation}
\label{eq:ABdef}
\begin{aligned}
A_\lambda(\theta,\omega) 
&= B_\lambda(\theta,\omega) \left(\eee^{4\pi \omega}\int_{0}^{2\pi} 
\frac{\ddd\phi}{B_\lambda(\phi,\omega)} 
+ (1-\eee^{4\pi\omega}) \int_{0}^{\theta}\frac{\ddd\phi}{B_\lambda(\phi,\omega)}\right),\\[0.2cm]
B_\lambda(\theta,\omega) 
&= \eee^{\lambda\cos\theta + 2\theta\omega}.
\end{aligned}
\end{equation}
After rewriting 
\begin{equation}
A_\lambda(\theta,\omega) 
= B_\lambda(\theta,\omega) \left(\int_{\theta-2\pi}^{0} 
\frac{\ddd\phi}{B_\lambda(-\phi,-\omega)} 
+  \int_{0}^{\theta}\frac{\ddd\phi}{B_\lambda(\phi,\omega)}\right)
\end{equation}
and noting that $B_\lambda(\phi,\omega)=B_\lambda(-\phi,-\omega)$, we easily check that 
\begin{equation}
p_\lambda(\theta,\omega) = p_\lambda(-\theta,-\omega),
\end{equation} 
a property we will need later. In particular, in view of \eqref{eq:symuni}, we have
\begin{equation}
\int_{\R} \mu(\ddd\omega) \int_{0}^{2\pi} \ddd\theta\,p_\lambda(\theta, \omega)\,\sin\theta = 0.
\end{equation}

Since $\psi(t) = \psi(0) = \Phi = 0$, we see from \eqref{eq:orderpar} that $p_\lambda(\theta,\omega)$ 
in \eqref{eq:stat} is a solution if and only if $r$ satisfies
\begin{equation}
\label{mfKsyn}
\int_{\R} \mu(\ddd\omega) \int_{0}^{2\pi} \ddd\theta\,p_\lambda(\theta,\omega)\,\cos\theta = r,
\qquad \lambda=2Kr/D.
\end{equation}
This gives us a \emph{self-consistency relation} for
\begin{equation}
r=r(D,K)
\end{equation} 
a situation that is typical for mean-field systems, which can in principle be solved (and possibly 
has more than one solution). The equation in \eqref{mfKsyn} always has a solution with $r=0$: 
the \emph{unsynchronized state} corresponding to $p_0(\theta,\omega) = \frac{1}{2\pi}$ for all 
$\theta,\omega$. A (not necessarily unique) solution with $r \in (0,1)$ exists when the coupling 
strength $K$ exceeds a critical threshold $K_{c}=K_c(\mu,D)$. When this occurs, we say that 
the oscillators are in a \emph{partially synchronized state}. As $K$ increases also $r$ increases 
(see Fig.~\ref{fig-r(K)}). Moreover, $r \uparrow 1$ as $K \to \infty$ and we say that the oscillators 
converge to a \emph{fully synchronized state}. When $K$ crosses $K_{c}$, the system exhibits 
a second-order phase transition, i.e., $K \mapsto r(K)$ is continuous at $K=K_c$.

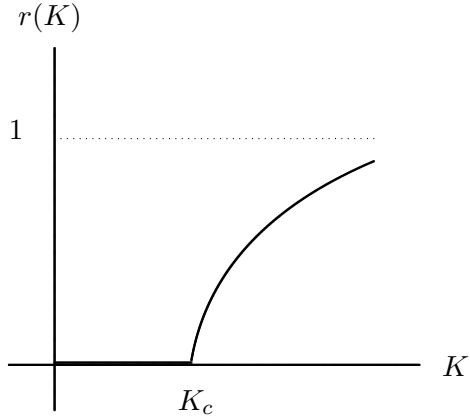
\begin{figure}[htbp]
\vspace{0.4cm}
\begin{center}
\setlength{\unitlength}{0.6cm}
\begin{picture}(8,6)(0,0)
{\thicklines
\qbezier(0,-1)(0,3)(0,7)
\qbezier(-1,0)(4,0)(8,0)
\qbezier(0,0.05)(1.5,0.05)(3,0.05)
\qbezier(3,0.1)(3.5,3)(7,4.5)
}
\qbezier[50](0,5)(3.5,5)(7,5)
\put(8.5,-.25){$K$}
\put(-.8,7.5){$r(K)$}
\put(2.7,-1){$K_c$}
\put(-1,5){$1$}
\end{picture}
\end{center}
\vspace{0.5cm}
\caption{\small Picture of $K \mapsto r(K)$ for fixed $\mu$ and $D$.}
\label{fig-r(K)}
\vspace{0.2cm}
\end{figure}

For the case where the frequency distribution $\mu$ is \emph{symmetric and unimodal}, an 
explicit expression is known for $K_{c}$:
\begin{equation}
\label{eq:Kcid}
\frac{1}{K_{c}} = \int_\R \mu(\ddd\omega)\,\frac{D}{D^2+4\omega^2}.
\end{equation}
Thus, when the spread of $\mu$ is large compared to $K$, the oscillators are not able to 
synchronize and they rotate near their own frequencies. As $K$ increases, this remains 
the case until $K$ reaches $K_{c}$. After that a small fraction of synchronized oscillators 
starts to emerge, which becomes of macroscopic size when $K$ moves beyond $K_c$.
For $\mu$ symmetric and unimodal it is \emph{conjectured} that for $K > K_c$ there is 
a \emph{unique} synchronized solution $p_\lambda(\cdot,\cdot)$ with $r \in (0,1)$ solving 
\eqref{mfKsyn} (Lu\c{c}on~\cite[Conjecture 3.12]{L12}). This conjecture has been proved 
when $\mu$ is narrow, i.e., the disorder is small (Lu\c{c}on~\cite[Proposition 3.13]{L12}).

\begin{remark}
\label{rem:stability}
{\rm
Stability of stationary solutions has been studied by Strogatz and Mirollo~\cite{SM91},
Strogatz, Mirollo and Matthews~\cite{SMM92}, Lu\c{c}on~\cite[Section 3.4]{L12}. For symmetric 
unimodal disorder, the unsynchronized state is linearly stable for $K<K_c$ and linearly unstable 
for $K>K_c$, while the synchronized state for $K>K_c$ is linearly stable at least for small disorder. 
Not much is known about stability for general disorder.} \hfill\qed
\end{remark}

There is no closed form expression for $K_{c}$ beyond symmetric unimodal disorder, except 
for special cases, e.g.\ symmetric binary disorder. We refer to Lu\c{c}on~\cite{L12} for an 
overview. A large deviation analysis of the empirical process of oscillators has been carried 
out in Dai Pra and den Hollander~\cite{DdH96}. 


\subsection{Diffusive scaling of the average phase}
\label{sec:scaling}

Bertini, Giacomin and Poquet~\cite{BGP14} showed that for the mean-field noisy Kuramoto 
model \emph{without disorder}, in the limit as $N\to\infty$ the synchronization level evolves 
on time scale $t$ and converges to a deterministic limit, while the average phase evolves on 
time scale $Nt$ and converges to a Brownian motion with a \emph{renormalized noise strength}. 
\footnote{The fact that the average phase evolves slowly was already noted by Ha and 
Slemrod~\cite{HS11} for the Kuramoto model with disorder and without noise, while an 
approximate solution was obtained by Sonnenschein and Schimansky-Geier~\cite{SS13} 
for the Kuramoto model without disorder and with noise.}

\begin{theorem}[Bertini, Giacomin and Poquet~\cite{BGP14}]
\label{thm:slowangle}
Suppose that $\mu = \delta_0$ and $r>0$. Then, in distribution, 
\begin{equation}
\label{eq:rpsitimescale}
\begin{aligned}
&\lim_{N\to\infty} \psi_N(Nt) = \psi_*(t),\\ 
&\lim_{N\to\infty} r_N(t) = r(t),
\end{aligned}
\end{equation}
with 
\begin{equation}
\begin{array}{lll}
&\ddd \psi_*(t) = D_*\,\ddd W_*(t), &\psi_*(0) = \Phi,\\
&\lim_{t\to\infty} r(t) = r, &r(0) = R,
\end{array}
\end{equation}
where $(W_*(t))_{t \geq 0}$ is a standard Brownian motion and
\begin{equation}
\label{D*def}
D_* = D_*(D,K, r) = \frac{1}{\sqrt{1-[I_0(2Kr/D)]^{-2}}}, \qquad r= r(D,K),
\end{equation}
with $I_0$ the modified Bessel function of order zero given by 
\begin{equation}
\label{eq:I0def}
I_0(\lambda) = \frac{1}{2\pi} \int_0^{2\pi}\ddd\theta\,\eee^{\lambda\cos\theta}, 
\qquad \lambda \in [0,\infty).
\end{equation} 
\end{theorem}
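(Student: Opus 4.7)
Since $\mu=\delta_0$, the McKean--Vlasov equation \eqref{eq:McKean} is rotationally invariant, and its set of stationary densities is the one-parameter circle $\{p_\lambda(\cdot-\psi):\psi\in[0,2\pi)\}$ obtained by rigid rotation of the symmetric profile $p_\lambda$ of \eqref{eq:stat}. The linearization at $p_\lambda$ therefore has $\partial_\theta p_\lambda$ in its kernel (infinitesimal rotation) and a spectral gap on the orthogonal complement. This structure is the source of the separation of time scales: fluctuations transverse to the stationary circle are damped exponentially on the $O(1)$ scale, accounting for the relaxation of $r_N(t)$, while fluctuations along the circle are undamped, accumulate diffusively under the noise, and become of order one only on the scale $Nt$.

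First I would prove the second line of \eqref{eq:rpsitimescale} using the standard Sznitman coupling: it yields propagation of chaos on bounded time intervals, so the empirical measure of $(\theta_j(t))_{j=1}^N$ converges in distribution to the unique solution of \eqref{eq:McKean} starting from $\rho$, whence $r_N(t)\to r(t)$. The relaxation $\lim_{t\to\infty} r(t)=r(D,K)$ follows from the spectral gap above together with an entropy/Lyapunov argument for \eqref{eq:McKean}, which is by now standard in the analysis of the mean-field noisy Kuramoto PDE.

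Next I would derive an autonomous SDE for $\psi_N$. Writing the complex order parameter as $Z_N(t)=r_N(t)\eee^{\im\psi_N(t)}=N^{-1}\sum_{j=1}^N\eee^{\im\theta_j(t)}$ and applying It\^o's formula to $\arg(Z_N)$, one finds, after calibrating $\Phi=0$,
\begin{equation}
\ddd\psi_N(t)=b_N(t)\,\ddd t+\frac{D}{N\,r_N(t)}\sum_{j=1}^N \cos(\theta_j(t)-\psi_N(t))\,\ddd W_j(t),
\end{equation}
where $b_N(t)$ is an $O(N^{-1})$ drift whose leading part has mean zero against any rotationally symmetric profile by the symmetry $p_\lambda(\theta)=p_\lambda(-\theta)$. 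Speeding up time by $N$ absorbs the $1/N$ prefactor in the quadratic variation of the martingale term, and by propagation of chaos the empirical average $N^{-1}\sum_j\cos^2(\theta_j-\psi_N)$ may be replaced by $\int_0^{2\pi}\cos^2\theta\,p_\lambda(\theta)\,\ddd\theta$, identifying the limit of $\psi_N(Nt)$ as a Brownian motion with some constant diffusion coefficient $D_*^2$. Evaluating the integral against $p_\lambda$ via the identity $\int_0^{2\pi}\eee^{\lambda\cos\theta}\,\ddd\theta=2\pi I_0(\lambda)$ together with the recurrence relations for modified Bessel functions and the self-consistency $r=I_1(\lambda)/I_0(\lambda)$, $\lambda=2Kr/D$, reduces $D_*$ to the closed form \eqref{D*def}.

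The main obstacle is the averaging step, which must be valid \emph{uniformly} on time intervals of length $O(N)$. Transverse fluctuations of the empirical measure around the stationary circle are of typical size $N^{-1/2}$, yet they enter the drift and the diffusion coefficient of $\psi_N$ multiplicatively, so one must decompose $\mu_N(Nt)=p_\lambda(\cdot-\psi_N(Nt))+N^{-1/2}\eta_N(Nt)$ and show that $\eta_N$ remains bounded in a suitable Sobolev norm on $[0,T]$ for any fixed $T$. This bound rests on the spectral gap of the linearized generator, on a quantitative martingale decomposition aligned with the soft/hard mode splitting, and on a Gronwall estimate; it is the technical backbone of the argument in \cite{BGP14}.
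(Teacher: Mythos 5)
Your overall architecture --- It\^o's formula for $\arg Z_N$, speeding time up by $N$, invoking the symmetry $p_\lambda(\theta)=p_\lambda(-\theta)$ to kill the mean of the drift, and averaging the empirical $\cos^2$ moment by propagation of chaos --- closely mirrors the paper's own heuristic derivation of Theorem~\ref{thm:slowangle}. But there is a substantive gap in your identification of the diffusion coefficient. You keep only the martingale term after rescaling and assert that its quadratic variation, evaluated against $p_\lambda$, reduces via Bessel identities to the closed form \eqref{D*def}. It does not. Writing $\lambda=2Kr/D$, the martingale-only calculation yields
\begin{equation*}
\bar D_*^{\,2}=\frac{q}{r^2}=\frac{I_2(\lambda)\,I_0(\lambda)}{I_1(\lambda)^2},
\end{equation*}
the quantity the paper denotes $\bar D_*$ in \eqref{barD*def}, whereas \eqref{D*def} gives $D_*^{\,2}=I_0(\lambda)^2/(I_0(\lambda)^2-1)$. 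The paper goes out of its way to distinguish these: \eqref{ratioD} records $1\leq \bar D_*/D_*\leq 1.0392\ldots$, with the true $D_*$ strictly smaller in general.

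The missing contribution is the drift you dismiss as $O(N^{-1})$ with mean zero. After rescaling, $Nb_N(Nt)$ is $O(1)$, and although its average against $p_\lambda$ vanishes, its \emph{fluctuations} are of order one, are correlated with the martingale term, and contribute to the limiting quadratic variation of $\psi_N(Nt)$; in fact this contribution \emph{lowers} the diffusion constant from $\bar D_*$ to $D_*$. This joint drift--martingale contribution is precisely what the Hilbert-space analysis of \cite{BGP14} computes by projecting the slow dynamics onto the tangent direction $\partial_\theta p_\lambda$ of the circle of rotated stationary solutions and tracking the soft/hard mode interplay, and no amount of uniform control of $\eta_N$ repairs this if the drift is dropped at the outset. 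The paper's footnote following \eqref{eq:psiNNt} records exactly this point: the naive It\^o decomposition \eqref{eq:prebeforelimit1}--\eqref{eq:prebeforelimit2} alone cannot deliver the theorem. The remainder of your plan --- propagation of chaos for the second line of \eqref{eq:rpsitimescale}, a spectral-gap argument for the relaxation of $r(t)$, the $N^{-1/2}\eta_N$ Sobolev bound --- is in the right spirit, but the step identifying $D_*$ must treat the drift and the noise jointly rather than discarding the former as a lower-order nuisance.
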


\noindent
The work in \cite{BGP14} also shows that
\begin{equation}
\label{eq:rNNt}
\lim_{N\to\infty} r_N(Nt)=r \qquad \forall\,t>0,
\end{equation} 
i.e., the synchronization level not only tends to $r$ over time, it also stays close to $r$ on a 
time scale of order $N$. Thus, the synchronization level is much less volatile than the average 
phase.

In Section~\ref{sec:prep} we explain the heuristics behind Theorem~\ref{thm:slowangle}. This
heuristics will play a key role in our analysis of the Kuramoto model on the hierarchical lattice 
in the hierarchical mean-field limit. In fact, Conjecture~\ref{thm:scalphasend} below will extend 
Theorem~\ref{thm:slowangle} to the hierarchical lattice. It is important to note that the diffusive 
scaling only occurs in the model \emph{without disorder}. Indeed, for the model with disorder it was 
shown in Lu\c{c}on and Poquet~\cite{LP17} that the fluctuations of the disorder prevail over the 
fluctuations of the noise, resulting in `travelling waves' for the empirical distribution of the 
oscillators. Therefore, also on the hierarchical lattice we only consider the model without disorder.  


\subsection{Hierarchical lattice}
\label{sec:HL}

The hierarchical lattice of order $N$ consist of countable many vertices that form communities 
of sizes $N$, $N^2$, etc. For example, the hierarchical lattice of order $N=3$ consists of vertices 
that are grouped into $1$-block communities of $3$ vertices, which in turn are grouped 
into $2$-block communities of $9$ vertices, and so on. Each vertex is assigned a label that 
defines its location at each block level (see Fig.~\ref{fig:hierarchical-lattice}).

\begin{figure}[htbp] 
\centering
\includegraphics[width=0.85\linewidth]{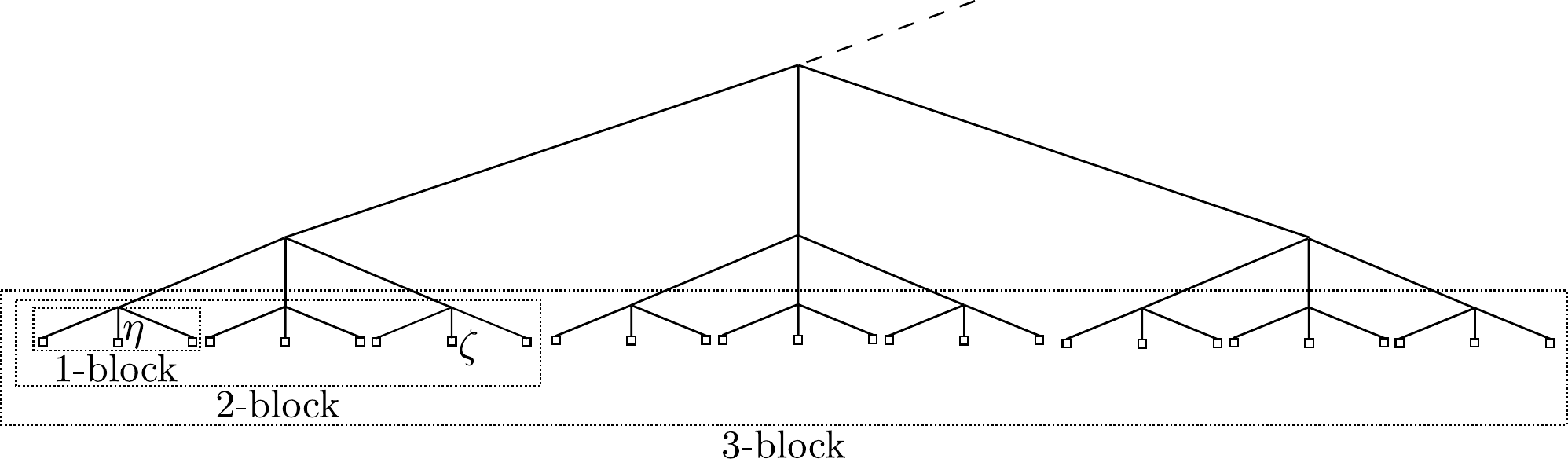} 
\caption{\small The hierarchical lattice of order $N=3$. The vertices live at the lowest level. 
The tree visualizes their distance: the distance between two vertices $\eta,\zeta$ is the height 
of their lowest common branching point in the tree: $d(\eta,\zeta)=2$ in the picture.} 
\label{fig:hierarchical-lattice} 
\end{figure}

Formally, the hierarchical group $\Omega_{N}$ of order $N\in\N\backslash\{1\}$ is the set
\begin{equation}
\Omega_{N} = \bigg\{\eta = (\eta^{\ell})_{\ell\in\N_{0}} \in \{0,1,\ldots,N-1\}^{\N_{0}}
\colon\, \sum_{\ell\in\N_{0}} \eta^{\ell} < \infty \bigg\}
\end{equation}
with addition modulo $N$, i.e., $(\eta + \zeta)^{\ell} = \eta^{\ell} + \zeta^{\ell}\,(\text{mod}\,N)$, 
$\ell\in\N_{0}$. The distance on $\Omega_{N}$ is defined as
\begin{equation}
d\colon\,\Omega_{N} \times \Omega_{N}\to\N_{0}, \quad (\eta,\zeta) \mapsto 
\min\big\{k\in\N_{0}\colon\,\eta^{\ell} = \zeta^{\ell}\,\,\forall\, \ell\geq k\big\},
\end{equation}
i.e., the distance between two vertices is the smallest index from which onwards the 
sequences of hierarchical labels of the two vertices agree. This distance is ultrametric:
\begin{equation}
d(\eta,\zeta) \leq \min\{d(\eta,\xi),d(\zeta,\xi)\} \qquad \forall\,\eta,\zeta,\xi \in \Omega_{N}.
\end{equation} 
For $\eta \in \Omega_{N}$ and $k\in\mathbb{N}_{0}$, the $k$-block around $\eta$ is 
defined as
\begin{equation}
B_{k}(\eta) = \{\zeta\in\Omega_{N}\colon\,d(\eta,\zeta) \leq k\}.
\end{equation}


\subsection{Hierarchical Kuramoto model}
\label{sec:HKM}

We are now ready to define the model that will be our object of study. Each vertex 
$\eta\in\Omega_{N}$ carries a phase oscillator, whose phase at time $t$ is denoted
by $\theta_{\eta}(t)$. Oscillators interact in pairs, but at a strength that depends on 
their hierarchical distance. To modulate this interaction, we introduce a sequence of 
interaction strengths
\begin{equation}
\label{eq:Kdef}
(K_k)_{k\in\N} \in (0,\infty)^{\N},
\end{equation}
and we let each pair of oscillators $\eta,\zeta\in\Omega_{N}$ at distance $d(\eta, \zeta) 
= d$ interact as in the mean-field Kuramoto model with $K/N$ replaced by $K_{d}/N^{2d-1}$, 
where the scaling factor is chosen to ensure that the model remains well behaved in the 
limit as $N \to\infty$. Thus, our coupled evolution equations read 
\begin{equation}
\label{eq:model}
\ddd\theta_{\eta}(t) = 
\sum_{\zeta\in \Omega_{N}} \frac{K_{d(\eta, \zeta)}}{N^{2d(\eta, \zeta)-1}}\,
\sin\big[\theta_{\zeta}(t) - \theta_{\eta}(t)\big]\,\ddd t + D\,\ddd W_{\eta}(t), 
\qquad \eta\in\Omega_{N}, t \geq 0,
\end{equation}
where $(W_{\eta}(t))_{t \geq 0}$, $\eta\in\Omega_{N}$, are i.i.d.\ standard Brownian motions.
As initial condition we take, as in \eqref{eq:initial},
\begin{equation}
\label{eq:initialhierar}
\begin{aligned}
\blacktriangleright\quad\,
&\theta_{\eta}(0),\,\eta \in \Omega_{N}, \text{ are i.i.d.\ and are drawn from}\\[-.1cm]  
&\text{a common probability distribution } \rho(\ddd\theta) \text{ on } [0,2\pi).
\end{aligned}
\end{equation}
We will be interested in understanding the evolution of average phase in the definition 
of the order parameter associated with the $N^{k}$ oscillators in the $k$-block around 
$\eta$ at time $N^kt$, defined by 
\begin{equation}
\label{eq:R}
R_{\eta, N}^{[k]}(Nt)\,\eee^{\im\Phi_{\eta, N}^{[k]}(t)} 
= \frac{1}{N^{k}} \sum_{\zeta \in B_{k}(\eta)} \eee^{\im\theta_{\zeta}(N^kt)}, 
\qquad \eta \in \Omega_{N}, t \geq 0, 
\end{equation}
where $R_{\eta,N}^{[k]}(Nt)$ is the synchronization level at time $N^kt$ and $\Phi_{\eta,N}^{[k]}(t)$ 
is the average phase at time $N^kt$. The new time scales $Nt$ and $t$ will turn out to be 
natural in view of the scaling in Theorem \ref{thm:slowangle}. The synchronization 
level $R_{\eta,N}^{[k]}$ captures the synchronization of the $(k-1)$-blocks, of which there are $N$ 
in total constituting the $k$-block around $\eta$. These blocks must synchronize before their average 
phase $\Phi_{\eta,N}^{[k]}$ can begin to move, which is why $R_{\eta,N}^{[k]}$ moves on a 
different time scale compared to $\Phi_{\eta,N}^{[k]}$. Our goal will be to pass to the limit $N\to\infty$, 
look at the limiting synchronization levels around a given vertex, say $\eta=0^\N$, and 
classify the scaling behavior of these synchronization levels as $k\to\infty$ into universality 
classes according to the choice of $(K_k)_{k\in\N}$ in \eqref{eq:Kdef}.

Note that, for every $\eta\in\Omega_N$, we can telescope to write
\begin{equation}
\label{eq:telescope}
\begin{aligned}
\sum_{\zeta \in \Omega_N} \frac{K_{d(\zeta, \eta)}}{N^{2d(\eta, \zeta)-1}}
\sin\big[\theta_{\zeta}(t) - \theta_{\eta}(t)\big] 
&= \sum_{k\in\N} \frac{K_{k}}{N^{2k-1}} \sum_{\zeta \in B_{k}(\eta)/B_{k-1}(\eta)}
\sin\big[\theta_{\zeta}(t) - \theta_{\eta}(t)\big]\\
&= \sum_{k\in\N} \Big(\frac{K_k}{N^{2k-1}} - \frac{K_{k+1}}{N^{2(k+1)-1}}\Big) 
\sum_{\zeta \in B_{k}(\eta)} \sin\big[\theta_{\zeta}(t) - \theta_{\eta}(t)\big].
\end{aligned}
\end{equation}
Inserting \eqref{eq:telescope} into \eqref{eq:model} and using \eqref{eq:R}, we get
\begin{equation}
\label{eq:remodel}
\begin{aligned}
&\ddd\theta_{\eta}(t)\\
&\qquad = \sum_{k\in\N} \frac{1}{N^{k-1}}\Big(K_{k} - \frac{K_{k+1}}{N^{2}}\Big)\,
R_{\eta, N}^{[k]}(N^{1-k}t)\,\sin\Big[\Phi_{\eta, N}^{[k]}(N^{-k}t) - \theta_{\eta}(t)\Big]\,\ddd t 
+ D\,\ddd W_{\eta}(t).
\end{aligned}
\end{equation}
This shows that, like in \eqref{eq:Knoiseorderpar}, \emph{the oscillators are coupled via the 
order parameters associated with the $k$-blocks for all $k\in\N$}, suitably weighted. 
As for the mean-field Kuramoto model, for every $\eta \in \Omega_N$, $R_{\eta, N}^{[k]}(N^{1-k}t)$
and $\Phi_{\eta, N}^{[k]}(N^{-k}t)$ are random variables that depend on $(K_k)_{k\in\N}$ and $D$.

When we pass to the limit $N\to\infty$ in \eqref{eq:remodel}, in the right-hand side of 
\eqref{eq:remodel} only the term with $k=1$ survives, so that we end up with an 
\emph{autonomous} evolution equation similar to \eqref{eq:propcha}. The goal of the 
present paper is to show that a similar decoupling occurs \emph{at all block levels}. 
Indeed, we expect the successive time scales at which synchronization occurs to separate. 
If there is synchronization at scale $k$, then we expect the average of the $k$-blocks 
around the origin forming the $(k+1)$-blocks (of which there are $N$ in total) to behave 
\emph{as if they were single oscillators} at scale $k+1$. 

Dahms~\cite{D02} considers a multi-layer model with a different type of interaction: single 
layers labelled by $\N$, each consisting of $N$ oscillators, are stacked on top of each other, 
and each oscillator in each layer is interacting with the \emph{average phases} of the oscillators 
in all the other layers, with interaction strengths $(\tilde K_k)_{k\in\N}$ (see \cite[Section 1.3]{D02}). 
For this model a necessary and sufficient criterion is derived for synchronization to be present 
at all levels in the limit as $N\to\infty$, namely, $\sum_{n\in\N} \tilde K_k^{-1} < \infty$ (see 
\cite[Section 1.4]{D02}). We will see that in our hierarchical model something similar is 
happening, but the criterion is rather more delicate.


\section{Main results}
\label{sec:results}

In Section~\ref{subsec:multi} we state a conjecture about the multi-scaling of the system 
(Conjecture~\ref{thm:scalphasend} below), which involves a renormalization transformation 
describing the synchronization level and the average phase on successive hierarchical levels. 
In Section~\ref{subsec:truncation} we propose a truncation approximation that simplifies the 
renormalization transformation, and argue why this approximation should be fairly accurate.  
In Section~\ref{subsec:universality} we analyse the simplified renormalization transformation 
and identify three universality classes for the behavior of the synchronization level as we 
move upwards in the hierarchy, give sufficient conditions on $(K_k)_{k\in\N}$ for each 
universality class (Theorem~\ref{thm:classesnd} below), and provide bounds on the 
synchronization level (Theorem~\ref{thm:critcasend} below). The details are given in 
Sections~\ref{sec:thmscalphase}--\ref{sec:thmclasses}. Without loss of generality we set 
$D=1$ in \eqref{eq:model}. 


\subsection{Multi-scaling}
\label{subsec:multi}

Our first result is a conjecture stating that the average phase of the $k$-blocks behaves like 
that of the noisy mean-field Kuramato model described in Theorem~\ref{thm:slowangle}. 
Recall the choice of time scales in \eqref{eq:R}. 

\begin{conjecture}{\bf (Multi-scaling for the block average phases)}
\label{thm:scalphasend}
Fix $k\in\N$ and assume that $R^{[k]}>0$. Then, in distribution,
\begin{equation}
\lim_{N\to\infty} \Phi_{0,N}^{[k]}(t) = \Phi_{0}^{[k]}(t),
\end{equation}
where $(\Phi_{0}^{[k]}(t))_{t \geq 0}$ evolves according to the SDE
\begin{equation}
\label{eq:thm:scalphasend}
\ddd\Phi_{0}^{[k]}(t) = K_{k+1}\,\mathcal{E}^{[k]}\,R_{0}^{[k+1]}(t)\,
\sin\big[\Phi - \Phi_{0}^{[k]}(t)\big]\,\ddd t + \mathcal{D}^{[k]}\,\ddd W_{0}^{[k]}(t),
\quad t \geq 0,
\end{equation}
$(W_{0}^{[k]}(t))_{t \geq 0}$ is a standard Brownian motion, $\Phi=0$ by calibration, 
and 
\begin{equation}
\label{eq:Tcompnd}
(\mathcal{E}^{[k]},\mathcal{D}^{[k]}) 
= \cT_{(K_\ell)_{1 \leq \ell \leq k}}(\mathcal{E}^{[0]},\mathcal{D}^{[0]}), \qquad k\in\N,
\end{equation}
with $(\mathcal{E}^{[0]},\mathcal{D}^{[0]}) = (1,1)$ and $\cT_{(K_\ell)_{1 \leq \ell \leq k}}$ a renormalization 
transformation.
\end{conjecture}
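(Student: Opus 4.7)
The plan is to proceed by induction on the hierarchical level $k$, exploiting the telescoping decomposition of the drift in~\eqref{eq:telescope}--\eqref{eq:remodel} together with the separation of time scales that the scaling in~\eqref{eq:R} makes manifest. The base case $k=1$ reduces to the mean-field noisy Kuramoto model inside the $1$-block $B_1(0)$ of $N$ oscillators observed at the slow time scale $Nt$: the $K_1/N$ term in~\eqref{eq:remodel} is precisely the mean-field coupling of Theorem~\ref{thm:slowangle}, while the higher-order terms contribute the external drift $K_2\,R_{0,N}^{[2]}(t)\sin\bigl[\Phi_{0,N}^{[2]}(t)-\theta_\eta(Nt)\bigr]$ from the next hierarchical level (with corrections of order $1/N^2$ that vanish in the limit). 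Averaging the level-$2$ drift over the $N$ oscillators of a synchronized $1$-block, via the trigonometric identity $\tfrac{1}{N}\sum_\zeta \sin[\psi-\theta_\zeta] = R_0^{[1]}\sin[\psi-\Phi_0^{[1]}]$, produces the effective coupling $K_2\,r_1\,R_{0,N}^{[2]}(t)\sin\bigl[\Phi_{0,N}^{[2]}(t)-\Phi_{0,N}^{[1]}(t)\bigr]$, giving $\mathcal{E}^{[1]}=r_1=r(1,K_1)$; the noise strength $\mathcal{D}^{[1]}$ is then read off from~\eqref{D*def} with $(D,K,r)\leftarrow(1,K_1,r_1)$.

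For the inductive step, assume the statement at all levels up to $k$. I regard the $N$ disjoint $k$-blocks $B_k(\eta_1),\ldots,B_k(\eta_N)$ constituting $B_{k+1}(0)$ as $N$ asymptotically exchangeable ``super-oscillators'' with phases $\Phi_{\eta_j,N}^{[k]}(t)$. By the inductive hypothesis each satisfies
\begin{equation*}
\ddd\Phi_{\eta_j,N}^{[k]}(t) = K_{k+1}\,\mathcal{E}^{[k]}\,R_{0,N}^{[k+1]}(t)\,\sin\bigl[\Phi_{0,N}^{[k+1]}(t)-\Phi_{\eta_j,N}^{[k]}(t)\bigr]\,\ddd t + \mathcal{D}^{[k]}\,\ddd W_{\eta_j,N}^{[k]}(t)
\end{equation*}
up to contributions from levels above $k+1$ that are negligible on the relevant time scale, and these super-oscillators are coupled only through their common order parameter $R_{0,N}^{[k+1]}\eee^{\im\Phi_{0,N}^{[k+1]}}$. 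This is precisely a mean-field noisy Kuramoto system of $N$ oscillators with interaction strength $K_{k+1}\mathcal{E}^{[k]}$ and noise strength $\mathcal{D}^{[k]}$, now observed on the time scale $N^{k+1}t = N\cdot N^k t$. Applying Theorem~\ref{thm:slowangle} to this renormalized $N$-body system, together with the same trigonometric aggregation applied to the drift coming from the $(k+2)$-term in~\eqref{eq:remodel} (which contributes the extra factor $r_{k+1}=r(\mathcal{D}^{[k]},K_{k+1}\mathcal{E}^{[k]})$), then yields the SDE at level $k+1$ with $\mathcal{E}^{[k+1]}=\mathcal{E}^{[k]}\,r_{k+1}$ and $\mathcal{D}^{[k+1]}$ obtained from~\eqref{D*def} with $(D,K,r)\leftarrow(\mathcal{D}^{[k]},K_{k+1}\mathcal{E}^{[k]},r_{k+1})$. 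This defines one step of $\cT_{K_{k+1}}$, and iterating recovers $\cT_{(K_\ell)_{1\le\ell\le k+1}}$.

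The main obstacle is the rigorous justification of the time-scale separation and the attendant ``hierarchical propagation of chaos''. Concretely, on physical time scale $N^{k+1}t$ one must show: (i) the internal degrees of freedom of each $k$-block equilibrate to the stationary measure of the renormalized McKean-Vlasov equation~\eqref{eq:McKean} with parameters $(\mathcal{D}^{[k]},K_{k+1}\mathcal{E}^{[k]})$, conditional on the slowly varying level-$k$ average phase; (ii) the level-$(k+1)$ order parameter is effectively frozen on this time scale; (iii) the $N$ $k$-blocks inside $B_{k+1}(0)$ become asymptotically independent, so that a law of large numbers and a diffusive central limit theorem apply as in the Bertini--Giacomin--Poquet proof. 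The non-linearity of the sine interaction prevents a closure of the BBGKY-type hierarchy and forces a recursive averaging argument in the spirit of~\cite{BGP14} to be propagated through $k$ nested scales. Finally, the assumption $R^{[k]}>0$ is needed so that at each level one sits in the synchronized regime of the renormalized dynamics, where the relevant stationary solution of~\eqref{eq:McKean} is non-trivial and (conjecturally) linearly stable; absent a full uniqueness and linear-stability theory for~\eqref{eq:McKean} beyond the small-disorder regime (Remark~\ref{rem:stability}), it is precisely this stability input that keeps the statement a conjecture rather than a theorem.
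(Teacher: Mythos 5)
Your conceptual picture --- view the $N$ disjoint $k$-blocks inside a $(k+1)$-block as super-oscillators driven by a renormalized mean-field Kuramoto dynamics, and apply Theorem~\ref{thm:slowangle} recursively on successive time scales --- coincides with the paper's, and you correctly identify the stability/propagation-of-chaos issues (i)--(iii) that keep the statement a conjecture. There is, however, a genuine gap in the way you compute the renormalized interaction strength, and it propagates through the whole induction. You treat the block average phase $\Phi_{0,N}^{[k]}$ as if its differential were the \emph{arithmetic average} of the differentials $\ddd\theta_\zeta$, for example when you claim that averaging the level-$2$ drift over a synchronized $1$-block via $\tfrac1N\sum_\zeta\sin[\psi-\theta_\zeta]=R_0^{[1]}\sin[\psi-\Phi_0^{[1]}]$ yields $\mathcal{E}^{[1]}=r_1$. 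But $\Phi_{0,N}^{[k]}$ is the \emph{argument} of $\tfrac{1}{N^k}\sum_{\zeta\in B_k(0)}\eee^{\im\theta_\zeta}$, a nonlinear function of the phases; its differential must be computed by It\^o's lemma, as in \eqref{eq:ito}--\eqref{eq:ito2}, and the relevant derivative is $\partial\Phi_0^{[k]}/\partial\theta_\zeta=\bigl(N^k R_0^{[k]}\bigr)^{-1}\cos\bigl[\Phi_0^{[k]}-\theta_\zeta\bigr]$. Thus the level-$(k+1)$ drift of $\theta_\zeta$ enters the drift of $\Phi_0^{[k]}$ weighted by $\cos[\Phi_0^{[k]}-\theta_\zeta]/R_0^{[k]}$, not by $1/N$.

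Once this $\cos$-weight is included and the sine is telescoped with \eqref{trig}, the quantity that survives is $\tfrac{1}{N^k}\sum_\zeta\cos^2\bigl[\Phi_0^{[k]}-\theta_\zeta\bigr]=Q_0^{[k]}$, not $\tfrac{1}{N^k}\sum_\zeta\cos\bigl[\Phi_0^{[k]}-\theta_\zeta\bigr]=R_0^{[k]}$. This is precisely why the paper's renormalization transformation must carry a \emph{second} order parameter $Q^{[k]}$ (the average of $\cos^2$ of the block phase distribution) alongside $R^{[k]}$, giving $\mathcal{E}^{[k]}=Q^{[k]}/R^{[k]}$ and $\mathcal{D}^{[k]}=\sqrt{Q^{[k]}}/R^{[k]}$ via the pair map $\cT_K$ of Definition~\ref{def:TKnd}. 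Your proposal $\mathcal{E}^{[1]}=r_1=R^{[1]}$, together with the one-parameter multiplicative recursion $\mathcal{E}^{[k+1]}=\mathcal{E}^{[k]}r_{k+1}$, omits $Q$ entirely; since $Q\geq R^2$ by Jensen, with strict inequality whenever $R<1$, one has $Q/R>R$, so your effective couplings are systematically too small, and the single-variable recursion cannot reproduce the transformation $\cT_{(K_\ell)_{1\leq\ell\leq k}}$ that the conjecture asserts. The noise formula you read off from \eqref{D*def} is consistent at level $1$, but your inductive step feeds the incorrect $\mathcal{E}^{[k]}$ into it, so the error compounds.
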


\noindent
The evolution in \eqref{eq:thm:scalphasend} is that of a mean-field noisy Kuramoto model 
with \emph{renormalized coefficients}, namely, an \emph{effective interaction strength} $K_{k+1}\,
\mathcal{E}^{[k]}$ and an \emph{effective noise strength} $\mathcal{D}^{[k]}$ (compare with 
\eqref{eq:Knoiseorderpar}). These coefficients are to be viewed as the result of a \emph{renormalization 
transformation} acting on block communities at levels $k \in \N$ successively, starting from the 
initial value $(\mathcal{E}^{[0]},\mathcal{D}^{[0]}) = (1,1)$. This initial value comes from the 
fact that single oscillators are completely synchronized by definition. The renormalization
transformation at level $k$ depends on the values of $K_\ell$ with $1 \leq \ell \leq k$. It also 
depends on the synchronization levels $R^{[\ell]}$ with $1 \leq \ell \leq k$, as well as on \emph{other 
order parameters} associated with the phase distributions of the $\ell$-blocks with $1 \leq \ell \leq k$.
In Section~\ref{subsec:truncation} we will analyse an \emph{approximation} for which this 
dependence simplifies, in the sense that only one set of extra order parameter comes into play, 
namely, $Q^{[\ell]}$ with $1 \leq \ell \leq k$, where $Q^{[\ell]}$ is the average of the cosine squared 
of the phase distribution of the $\ell$-block.      

The evolution in \eqref{eq:thm:scalphasend} is \emph{not closed} because of the presence 
of the term $R_{0}^{[k+1]}(t)$, which comes from the $(k+1)$-st block community one 
hierarchical level up from $k$. Similarly as in \eqref{eq:rrelax}, $R_{0}^{[k+1]}(t)$ is 
driven by a \emph{deterministic relaxation equation} such that 
\begin{equation}
\label{eq:Rrelax}
R_0^{[k+1]}(0)=R, \qquad \lim_{t\to\infty} R_0^{[k+1]}(t) = R^{[k+1]}. 
\end{equation}
This relaxation equation will be of no concern to us here (and is no doubt quite involved). 
Convergence holds at least for $R$ close to $R^{[k+1]}$ (recall Remark~\ref{rem:stability}). 
Thus, after a \emph{transient period}, \eqref{eq:thm:scalphasend} converges to the \emph{closed} 
evolution equation
\begin{equation}
\label{eq:thm:scalphasendalt}
\ddd\Phi_{0}^{[k]}(t) = K_{k+1}\,\mathcal{E}^{[k]}\,R^{[k+1]}\,
\sin\big[\Phi - \Phi_{0}^{[k]}(t)\big]\,\ddd t + \mathcal{D}^{[k]}\,\ddd W_{0}^{[k]}(t),
\quad t \geq 0.
\end{equation}
The initial values $(R,\Phi)$ in \eqref{eq:Rrelax} and \eqref{eq:thm:scalphasendalt} come 
from \eqref{RPhirho} and \eqref{eq:initialhierar}.

Conjecture~\ref{thm:scalphasend} perfectly fits the folklore of renormalization theory for 
interacting particle systems. The idea of that theory is that along an increasing sequence 
of mesoscopic space-time scales the evolution is the same as on the microscopic space-time 
scale, but with renormalised coefficients that arise from an `averaging out' on successive 
scales. It is generally hard to carry through a renormalization analysis in full detail, and 
there are only a handful of interacting particle systems for which this has been done with 
mathematical rigour. Moreover, there are delicate issues with the renormalization transformation 
being properly defined. However, in our model these issues should not arise because of the 
`layered structure' of the hierarchical lattice and the hierarchical interaction. Since the interaction 
between the oscillators is \emph{non-linear}, we currently have little hope to be able to turn 
Conjecture~\ref{thm:scalphasend} into a theorem and identify the precise form of 
$\cT_{(K_\ell)_{1 \leq \ell \leq k}}$. In Section~\ref{sec:multiscale} we will see that the 
non-linearity of the interaction causes a delicate interplay between the different hierarchical 
levels.     

In what follows we propose a simplified renormalization transformation $\bar\cT_{(K_\ell)_{1 
\leq \ell \leq k}}$, based on a \emph{truncation approximation} in which we keep only the 
interaction between \emph{successive} hierarchical levels. The latter \emph{can be analysed in 
detail} and replaces the renormalization transformation $\cT_{(K_\ell)_{1 \leq \ell \leq k}}$ 
in Conjecture~\ref{thm:scalphasend}, of which we do not know the details. We also argue 
why the truncation approximation is reasonable.           


\subsection{Truncation approximation}
\label{subsec:truncation}

The truncation approximation consists of replacing $\cT_{(K_\ell)_{1 \leq \ell \leq k}}$ by
a $k$-fold \emph{iteration of a renormalization map}:
\begin{equation}
\label{eq:composition}
\bar\cT_{(K_\ell)_{1 \leq \ell \leq k}} = \cT_{K_{k}} \circ \cdots \circ \cT_{K_1}.
\end{equation}  
In other words, we presume that what happens at hierarchical scale $k+1$ is dictated
only by what happens at hierarchical scale $k$, and not by any of the lower scales. 
These scales do manifest themselves via the successive interaction strengths, but 
not via a direct interaction.

Define
\begin{equation}
\label{eq:Zdef}
I_{0}(\lambda) = \frac{1}{2\pi} \int_0^{2\pi} \ddd\phi\,\,\eee^{\lambda\cos\phi}, 
\qquad \lambda>0,
\end{equation}
which is the modified Bessel function of the first kind.  After normalization, the integrand becomes 
what is called the von Mises probability density function on the unit circle with parameter 
$\lambda$, which is $\phi \mapsto p_\lambda(\phi,0)$ in \eqref{eq:stat}--\eqref{eq:ABdef}. 
We write $I_{0}'(\lambda) = I_{1}(\lambda)$ and $I_{0}''(\lambda) = I_{2}(\lambda)$.

\begin{definition}{\bf (Renormalization map)}
\label{def:TKnd} 
{\rm For $K \in (0,\infty)$, let $\cT_K\colon\,[0,1] \times [\tfrac12,1] \to [0,1] \times [\tfrac12,1]$ 
be the map 
\begin{equation}
\label{eq:TKmap}
(R',Q') = \cT_K(R,Q)
\end{equation} 
defined by
\begin{equation}
\label{eq:RQbar}
\begin{aligned}
R' &= R\,\frac{I_{1}(2KR'\sqrt{Q})}{I_{0}(2KR'\sqrt{Q})},\\
Q'-\tfrac12 &= (Q-\tfrac12) \,\Bigg[2\,\frac{I_{2}(2KR'\sqrt{Q})}{I_{0}(2KR'\sqrt{Q})}-1\Bigg].
\end{aligned}
\end{equation}
The first equation is a \emph{consistency relation}, the second equation is a 
\emph{recursion relation}. They must be used in that order to find the image point
$(R',Q')$ of the original point $(R,Q)$ under the map $\cT_K$.} \hfill\qed
\end{definition}  

With this renormalization mapping we can approximate the true renormalized system.

\begin{approximation}
After truncation, \eqref{eq:thm:scalphasend} can be approximated by 
\begin{equation}
\label{eq:thm:scalphasendapprox}
\ddd\Phi_{0}^{[k]}(t) = K_{k+1}\,\bar{\mathcal{E}}^{[k]}\,R_{0}^{[k+1]}(t)\,
\sin\big[\Phi - \Phi_{0}^{[k]}(t)\big]\,\ddd t + \bar{\mathcal{D}}^{[k]}\,\ddd W_{0}^{[k]}(t),
\quad t \geq 0,
\end{equation}
with
\begin{equation}
\label{eq:EDdef}
\bar{\mathcal{E}}^{[k]} = \frac{Q^{[k]}}{R^{[k]}}, \qquad 
\bar{\mathcal{D}}^{[k]} = \frac{\sqrt{Q^{[k]}}}{R^{[k]}},
\end{equation}
where
\begin{equation}
(R^{[k]},Q^{[k]}) = \bar\cT_{(K_\ell)_{1 \leq \ell \leq k}}(R^{[0]},Q^{[0]}),
\qquad (R^{[0]},Q^{[0]}) = (1,1).
\end{equation}
\end{approximation}   

\noindent
We will see in Section~\ref{sec:multiscale} that $R^{[k]}$ plays the role of the synchronization 
level of the $k$-blocks, while $Q^{[k]}$ plays the role of the average of the cosine squared of 
the phase distribution of the $k$-blocks (see \eqref{eq:barRQdef} below).  

In the remainder of this section we analyse the orbit $k \mapsto (R^{[k]},Q^{[k]})$ in detail. 
We will see that, under the simplified renormalization transformation, $k \mapsto (R^{[k]},Q^{[k]})$ 
is non-increasing in both components. In particular, synchronization cannot increase when 
the hierarchical level goes up. 

\begin{remark}
{\rm In Section~\ref{sec:multiscale} we will argue that a better approximation can be obtained 
by keeping one more term in the truncation approximation, but that the improvement is minor.} 
\hfill\qed
\end{remark}


\subsection{Universality classes}
\label{subsec:universality}

There are \emph{three universality classes} depending on the choice of $(K_k)_{k\in\N}$ 
in \eqref{eq:Kdef}, illustrated in Fig.~\ref{fig:rqcomic}:

\begin{figure}[htbp]
\centering
\includegraphics[scale=0.3]{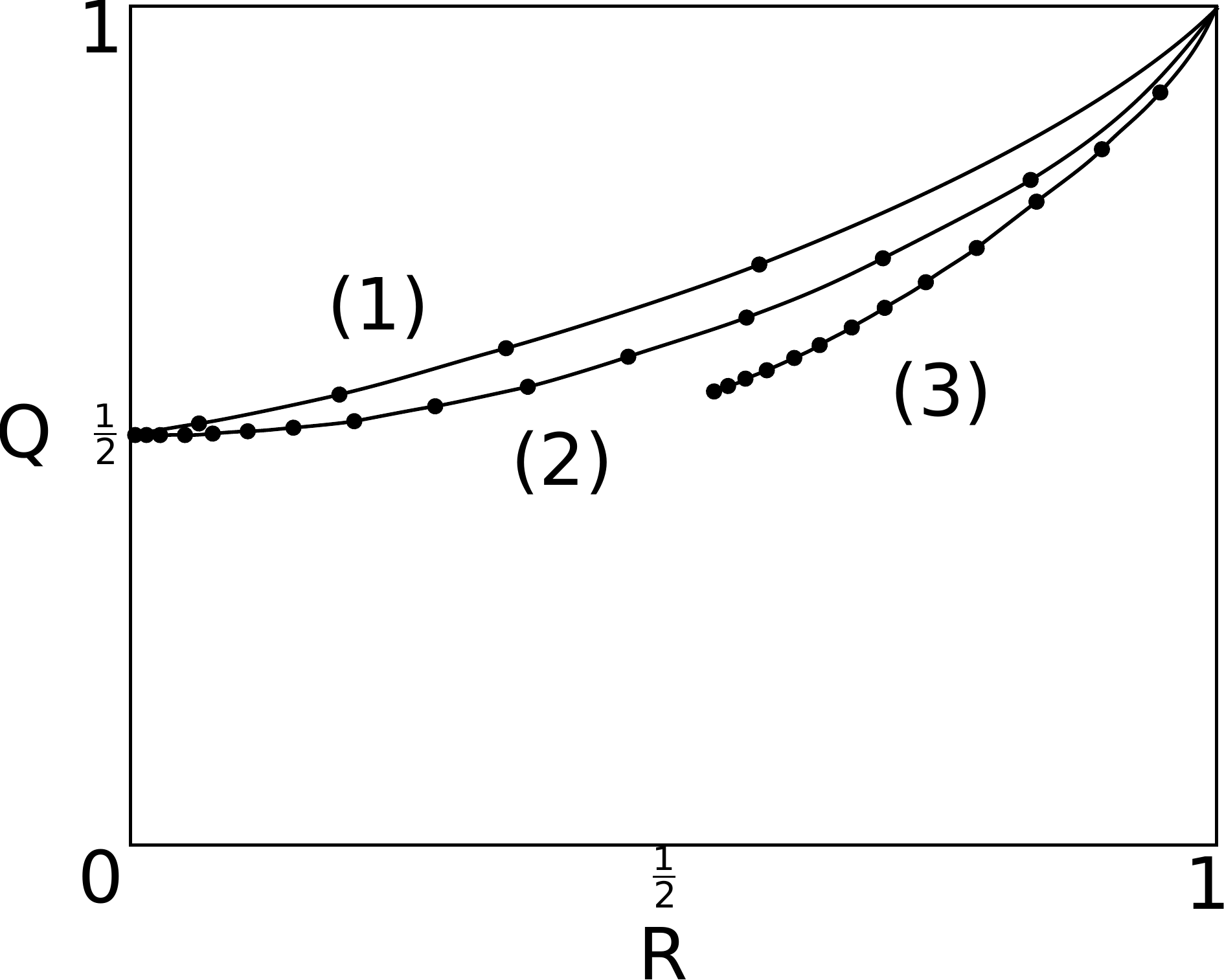}
\caption{The dots represent the map $k \mapsto (R^{[k]},Q^{[k]})$ for the three universality 
classes, starting from $(R^{[0]},Q^{[0]})=(1,1)$. The dots move left and down as $k$ increases.}
\label{fig:rqcomic}
\end{figure} 

\begin{enumerate}
\item[(1)] 
Synchronization is lost at a finite level:\\ 
$R^{[k]}>0$, $0 \leq k < k_*$, $R^{[k]}=0$, $k \geq k_*$ for some $k_*\in\N$. 
\item[(2)] 
Synchronization is lost asymptotically:\\ 
$R^{[k]}>0$, $k \in \N_0$, $\displaystyle\lim_{k\to\infty} R^{[k]}=0$.
\item[(3)] 
Synchronization is not lost asymptotically:\\
$R^{[k]}>0$, $k \in \N_0$, $\displaystyle\lim_{k\to\infty} R^{[k]}>0$.
\end{enumerate}

\noindent
Our second result provides sufficient conditions for universality classes (1) and (3)
in terms of the sum $\sum_{k\in\N} K_{k}^{-1}$. 

\begin{theorem}{\bf (Criteria for the universality classes)}
\label{thm:classesnd}
$\mbox{}$
\begin{itemize}
\item
$\sum_{k\in\N} K_{k}^{-1} \geq 4$ $\Longrightarrow$ universality class {\rm (1)}.
\item
$\sum_{k\in\N} K_{k}^{-1} \leq \frac{1}{\sqrt{2}}$ $\Longrightarrow$ universality class {\rm (3)}.
\qed
\end{itemize}
\end{theorem}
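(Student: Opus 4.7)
The proof of both implications rests on telescoping pointwise bounds on the decrement $R^{[k]} - R^{[k+1]}$ (or its square) coming from two-sided estimates on the Bessel ratio $\rho(\lambda) := I_1(\lambda)/I_0(\lambda)$. Whenever the non-trivial branch of \eqref{eq:RQbar} is active, the consistency relation reads $\rho(\lambda^{[k]}) = \lambda^{[k]}/\alpha^{[k]}$ with $\alpha^{[k]} := 2K_{k+1}\sqrt{Q^{[k]}}\,R^{[k]}$ (so $\alpha^{[k]} > 2$ is exactly the non-triviality condition), and $R^{[k+1]} = R^{[k]}\rho(\lambda^{[k]})$. The two implications only differ in which direction of Bessel inequality is used.

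For the first implication, I would use the Bessel recurrence $\rho'(\lambda) = 1 - \rho(\lambda)/\lambda - \rho(\lambda)^2$ together with $\rho' \geq 0$ (the well-known monotonicity of the von Mises mean) to get $\rho^2 \leq 1 - 1/\alpha^{[k]}$. Squaring the recursion yields
\[
(R^{[k]})^2 - (R^{[k+1]})^2 \;\geq\; \frac{R^{[k]}}{2K_{k+1}\sqrt{Q^{[k]}}} \;\geq\; \frac{R^{[k]}}{2K_{k+1}},
\]
and dividing by $R^{[k]} + R^{[k+1]} \leq 2R^{[k]}$ gives the clean decrement bound $R^{[k]} - R^{[k+1]} \geq 1/(4K_{k+1})$. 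Telescoping from $R^{[0]} = 1$ produces $R^{[N]} \leq 1 - \tfrac{1}{4}\sum_{k=1}^N K_k^{-1}$, so once $\sum_{k \in \N} K_k^{-1} \geq 4$ the right-hand side is non-positive at a finite $N^*$, forcing the orbit to have landed at $R^{[k_*]} = 0$ for some $k_* \leq N^*$; this is class (1).

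For the second implication, the key input is the Tur\'an-type inequality $I_1(\lambda)^2 \geq I_0(\lambda) I_2(\lambda)$ (log-concavity of $n \mapsto I_n(\lambda)$). Combined with the recurrence $I_0 - I_2 = (2/\lambda) I_1$ one deduces
\[
\rho(\lambda)^2 \;\geq\; \frac{I_2(\lambda)}{I_0(\lambda)} \;=\; 1 - \frac{2\rho(\lambda)}{\lambda} \;=\; 1 - \frac{2}{\alpha^{[k]}}.
\]
The mirror of the previous step then produces the pointwise upper bound $R^{[k]} - R^{[k+1]} \leq 1/(K_{k+1}\sqrt{Q^{[k]}})$. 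At $k = 0$ I use $Q^{[0]} = 1$ \emph{exactly}, so $R^{[0]} - R^{[1]} \leq 1/K_1$; at $k \geq 1$ I invoke the domain constraint $Q^{[k]} \geq 1/2$, giving $\sqrt{2}/K_{k+1}$. Telescoping and setting $S := \sum_{k \in \N} K_k^{-1}$,
\[
1 - \lim_{k \to \infty} R^{[k]} \;\leq\; \tfrac{1}{K_1} + \sqrt{2}\sum_{k \geq 2} K_k^{-1} \;=\; \sqrt{2}\,S \,-\, (\sqrt{2}-1)/K_1,
\]
so $S \leq 1/\sqrt{2}$ forces $\lim_k R^{[k]} \geq (\sqrt{2}-1)/K_1 > 0$, which is class (3). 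Exploiting $Q^{[0]} = 1$ at the first step (rather than the weaker $Q \geq 1/2$) is precisely what upgrades a non-strict inequality to a strictly positive lower bound.

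The principal technical obstacle is justifying $Q^{[k]} \geq 1/2$ for $k \geq 1$ along the orbit. The second component of $\cT_K$ reads $Q^{[k+1]} - \tfrac12 = (Q^{[k]} - \tfrac12)(1 - 4/\alpha^{[k]})$, which preserves $Q \geq 1/2$ only when $\alpha^{[k]} \geq 4$; yet $S \leq 1/\sqrt{2}$ only guarantees $K_k \geq \sqrt{2}$, hence $\alpha^{[k]} \geq 2\sqrt{2}$. I would resolve this either by a bootstrap (use the lower bound $R^{[k]} \geq 1 - \sqrt{2}\,S$ derived from the telescoping itself, together with the large $K_k$'s forced by $S$ being small, to upgrade to $\alpha^{[k]} \geq 4$), or by switching to a joint quantity such as $R^{[k]}\sqrt{Q^{[k]}}$ or $(R^{[k]})^2 - (2Q^{[k]}-1)$ on which the map is monotone without a domain hypothesis on $Q$. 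The constants $4$ and $1/\sqrt{2}$ in the theorem then trace directly to the factors $1$ and $2$ appearing in the two-sided envelope $1 - 2/\alpha \leq \rho(\lambda)^2 \leq 1 - 1/\alpha$.
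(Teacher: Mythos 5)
Your overall strategy matches the paper's exactly: both proofs establish the pointwise decrement bounds $R^{[k]}-R^{[k+1]}\geq \tfrac{1}{4K_{k+1}}$ (on the non-trivial branch) and $R^{[k]}-R^{[k+1]}\leq \tfrac{\sqrt{2}}{K_{k+1}}$, then telescope from $R^{[0]}=1$. The difference is technical: the paper substitutes the explicit rational envelopes $V^{+}(\lambda)=\tfrac{2\lambda}{1+2\lambda}$, $V^{-}(\lambda)=\tfrac{\lambda/2}{1+\lambda/2}$ (cited from Segura and Laforgia--Natalini) into the consistency relation and solves the resulting affine fixed-point problems exactly, whereas you derive equivalent estimates directly from $\rho'=1-\rho/\lambda-\rho^{2}\geq 0$ and the Tur\'an inequality $I_{1}^{2}\geq I_{0}I_{2}$, and work with squared quantities. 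In fact $\rho^{2}\leq 1-\rho/\lambda$ is algebraically identical to Segura's bound $\rho\leq V_{*}^{+}$ and $\rho^{2}\geq 1-2\rho/\lambda$ to Laforgia--Natalini's $\rho\geq V_{*}^{-}$, so you are re-deriving the same Bessel inequalities from first principles; the slack that the paper introduces by passing from $V_{*}^{\pm}$ to $V^{\pm}$ corresponds exactly to the slack you introduce by bounding $R^{[k]}+R^{[k+1]}$ by $2R^{[k]}$ (resp.\ by $R^{[k]}$), which is why the two routes land on the same constants $4$ and $\sqrt{2}$. Your refinement at $k=0$ using $Q^{[0]}=1$ exactly is a genuine small improvement (it yields a strictly positive $R^{[\infty]}$ even in the boundary case $S=1/\sqrt{2}$), but it is not needed for the statement as the paper phrases it.

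The one substantive error is in your formula for the $Q$-update, and it is worth correcting because it is what generates your ``principal technical obstacle.'' You write $Q^{[k+1]}-\tfrac12=(Q^{[k]}-\tfrac12)(1-4/\alpha^{[k]})$, but the correct factor is $1-2/\alpha^{[k]}$. The source of the discrepancy is the paper's notation: the paper defines $I_{2}:=I_{0}''$, which is \emph{not} the standard modified Bessel function of order two, but rather $\tfrac12\bigl(I_{0}+I_{2}^{\mathrm{std}}\bigr)$. Hence $W=I_{0}''/I_{0}=\tfrac12\bigl(1+I_{2}^{\mathrm{std}}/I_{0}\bigr)$ and $2W-1=I_{2}^{\mathrm{std}}/I_{0}=1-2\rho/\lambda=1-2/\alpha^{[k]}$, not $1-4/\alpha^{[k]}$. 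With the correct factor, $1-2/\alpha^{[k]}\geq 0$ precisely when $\alpha^{[k]}\geq 2$, i.e.\ on the non-trivial branch, so $Q^{[k]}\geq\tfrac12$ is preserved automatically along the orbit and no bootstrap or auxiliary joint quantity is needed --- this is exactly what the paper's Lemma~\ref{lem:components} and the stated domain $[0,1]\times[\tfrac12,1]$ of $\cT_{K}$ already record (indeed $2W-1=I_{2}^{\mathrm{std}}/I_{0}\geq 0$ for all $\lambda\geq 0$ since $I_{2}^{\mathrm{std}}\geq 0$). Once you replace $1-4/\alpha$ by $1-2/\alpha$, the concern you raise disappears and the rest of your argument goes through cleanly.
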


\noindent
Two examples are: (1) $K_{k} = \frac{3}{2\log 2}\log(k+1)$; (3) $K_{k} = 4e^{k}$. 
The scaling behaviour for these examples is illustrated via the numerical analysis in 
Appendix~\ref{app:numerics} (see, in particular, Fig.~\ref{fig:uniclass1} and Fig.~\ref{fig:uniclass3}
below).

The criteria in Theorem~\ref{thm:classesnd} are not sharp. Universality class (2) corresponds 
to a \emph{critical surface} in the space of parameters $(K_{k})_{k \in N}$ that appears to 
be rather complicated and certainly is not (!) of the type $\sum_{k\in\N} K_{k}^{-1} = c$ for 
some $\frac{1}{\sqrt{2}}<c<4$ (see Fig.~\ref{fig:uniclassescomic}). Note that the full sequence 
$(K_k)_{k\in\N}$ determines in which universality class the system is. 

\begin{figure}[htbp]
\vspace{0.3cm}
\centering
\includegraphics[scale=0.3]{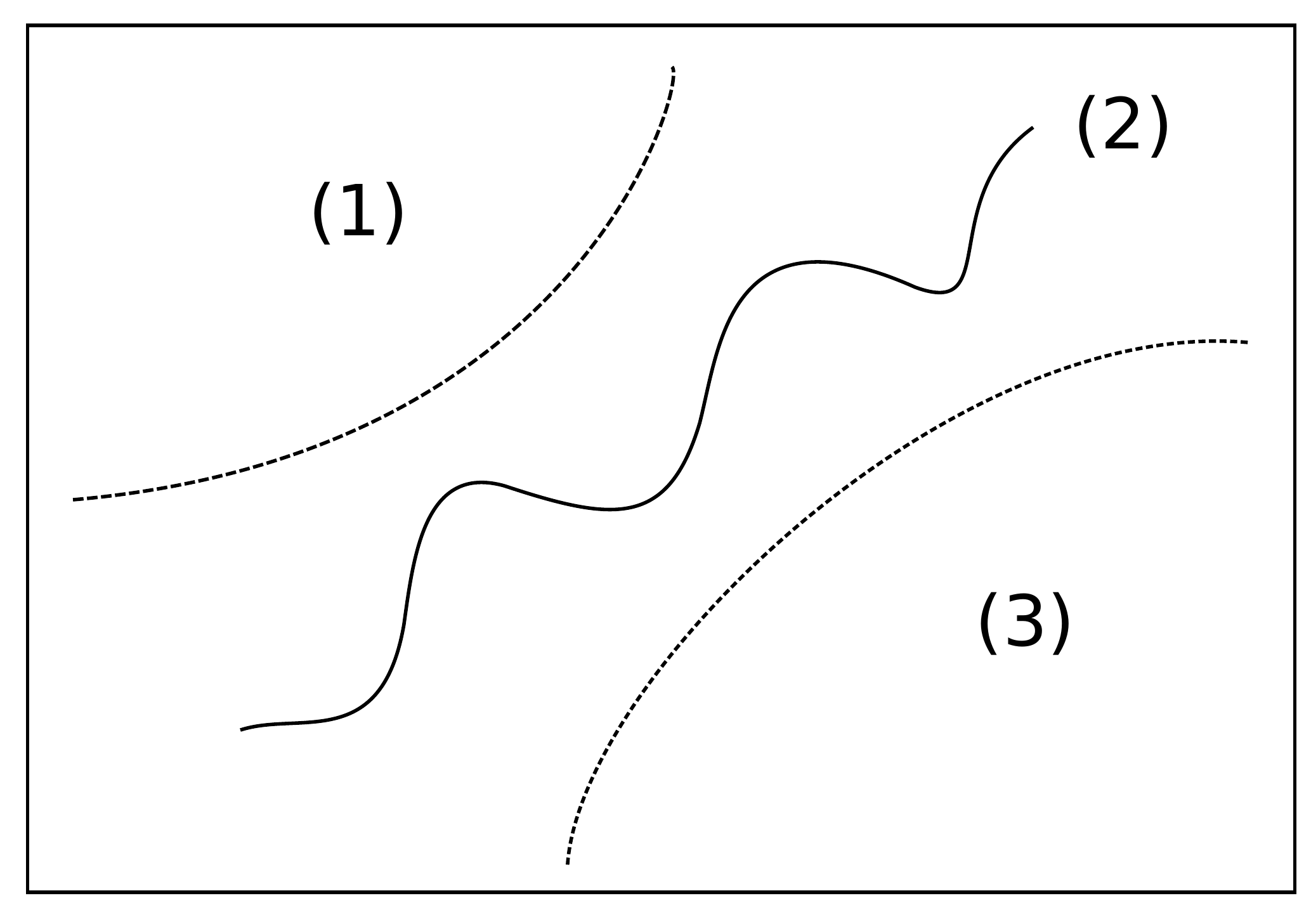}
\caption{Caricature showing the critical surface in the parameter space and the bounds 
provided by Theorem~\ref{thm:classesnd}.}
\label{fig:uniclassescomic}
\end{figure} 

The behaviour of $K_k$ as $k\to\infty$ determines the speed at which $R^{[k]} \to R^{[\infty]}$ 
in universality classes (2) and (3). Our third theorem provides upper and lower bounds.

\begin{theorem}{\bf (Bounds for the block synchronization levels)}
\label{thm:critcasend} 
\begin{itemize}
\item
In universality classes {\rm (2)} and {\rm (3)}, 
\begin{equation}
\label{eq:synbds}
\tfrac14 \sigma_k \leq R^{[k]} - R^{[\infty]} \leq \sqrt{2}\,\sigma_k, \qquad k \in \N_0,
\end{equation}
with $\sigma_k = \sum_{\ell>k} K_{\ell}^{-1}$.
\item 
In universality class {\rm (1)}, the upper bound in \eqref{eq:synbds} holds for $k\in\N_0$, while 
the lower bound in \eqref{eq:synbds} is replaced by
\begin{equation}
\label{eq:univ1lb1}
R^{[k]} - R^{[k_*-1]} \geq \tfrac14 \sum_{\ell=k+1}^{k_*-1} K_{\ell}^{-1}, \qquad 0 \leq k \leq k_*-2.
\end{equation}
The latter implies that
\begin{equation}
\label{eq:univ1lb2}
k_* \leq \max\left\{k \in \N\colon\, \sum_{\ell=1}^{k-1} K_\ell^{-1} < 4\right\}
\end{equation}
because $R^{[0]}=1$ and $R^{[k_*-1]}>0$.
\end{itemize}
\end{theorem}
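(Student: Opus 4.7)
The natural approach is to telescope $R^{[k]} - R^{[\infty]} = \sum_{\ell \geq k} (R^{[\ell]} - R^{[\ell+1]})$ and to control each summand individually via the consistency relation in Definition~\ref{def:TKnd}. Writing $h(\lambda) := I_1(\lambda)/I_0(\lambda)$ and $\lambda_{\ell+1} := 2K_{\ell+1}R^{[\ell+1]}\sqrt{Q^{[\ell]}}$, the consistency relation reads $R^{[\ell+1]} = R^{[\ell]}\,h(\lambda_{\ell+1})$, and since $h(\lambda) \in [0,1)$ the sequence $(R^{[k]})_{k\in\N}$ is non-increasing, so $R^{[\infty]}$ exists. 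Simple algebra rearranges this into
\begin{equation*}
R^{[\ell]} - R^{[\ell+1]} = R^{[\ell+1]}\,\frac{1-h(\lambda_{\ell+1})}{h(\lambda_{\ell+1})}
= \frac{g(\lambda_{\ell+1})}{2K_{\ell+1}\sqrt{Q^{[\ell]}}},\qquad g(\lambda) := \frac{\lambda(1-h(\lambda))}{h(\lambda)}.
\end{equation*}

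The crux is the two-sided bound $g(\lambda) \in [\tfrac12, 2]$ for all $\lambda > 0$, which is equivalent to
\begin{equation*}
\frac{\lambda}{\lambda + 2} \;\leq\; h(\lambda) \;\leq\; \frac{2\lambda}{2\lambda + 1}.
\end{equation*}
Both inequalities follow from Amos-type sharp bounds $\lambda/\sqrt{1+\lambda^2} \leq h(\lambda) \leq 2\lambda/(1 + \sqrt{1+4\lambda^2})$: the lower Amos bound implies the left inequality via $\sqrt{1+\lambda^2} \leq \lambda + 2$, the upper Amos bound implies the right one via $\sqrt{1+4\lambda^2} \geq 2\lambda$. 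Combined with the orbit invariant $Q^{[\ell]} \in [\tfrac12, 1]$ (hence $\sqrt{Q^{[\ell]}} \in [1/\sqrt{2}, 1]$), which is checked inductively from the $Q$-recursion in \eqref{eq:RQbar} together with the identity $2I_2(\lambda)/I_0(\lambda) - 1 = 1 - 4h(\lambda)/\lambda$, one obtains the per-step estimate
\begin{equation*}
\tfrac14\,K_{\ell+1}^{-1} \;\leq\; R^{[\ell]} - R^{[\ell+1]} \;\leq\; \sqrt{2}\,K_{\ell+1}^{-1}.
\end{equation*}

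For universality classes (2) and (3), summing over $\ell \geq k$ and re-indexing $m = \ell + 1$ immediately yields $\tfrac14\sigma_k \leq R^{[k]} - R^{[\infty]} \leq \sqrt{2}\,\sigma_k$, which is \eqref{eq:synbds}. For universality class (1) the upper bound is identical (the telescoping sum still dominates $R^{[k]} - R^{[k_*]} = R^{[k]}$), while the per-step lower bound is valid only while $R^{[\ell+1]} > 0$, i.e.\ for $\ell + 1 \leq k_* - 1$; summing $k \leq \ell \leq k_* - 2$ produces \eqref{eq:univ1lb1}. Applying \eqref{eq:univ1lb1} at $k = 0$ and using $R^{[0]} = 1$ and $R^{[k_*-1]} > 0$ gives $\sum_{\ell=1}^{k_*-1} K_\ell^{-1} < 4$, which is \eqref{eq:univ1lb2}.

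The main obstacle is the Bessel-function step: establishing $g(\lambda) \in [\tfrac12, 2]$ with the sharp constants that feed into the claimed $\tfrac14$ and $\sqrt{2}$, together with the verification that the orbit $(R^{[\ell]}, Q^{[\ell]})$ genuinely stays in the domain $[0,1] \times [\tfrac12, 1]$ (in particular, that $Q^{[\ell]}$ does not drop below $\tfrac12$ during iteration, which requires some care near the unsynchronized regime in class (1), where $\lambda_{\ell+1}$ can become small). Once these analytic facts are in place, the rest of the argument is a routine telescoping and summation.
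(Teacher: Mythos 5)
Your proposal follows essentially the same route as the paper: you obtain a per-step estimate
\[
\tfrac14\,K_{\ell+1}^{-1}\;\le\;R^{[\ell]}-R^{[\ell+1]}\;\le\;\sqrt{2}\,K_{\ell+1}^{-1}
\]
from the bounds $V^-(\lambda)=\lambda/(\lambda+2)\le V(\lambda)\le 2\lambda/(2\lambda+1)=V^+(\lambda)$ and the orbit invariant $Q^{[\ell]}\in[\tfrac12,1]$, then telescope. The only (cosmetic) difference is that you bypass the auxiliary sandwich sequences $R^{\pm}_k$ of Lemmas~\ref{lem:Vplusminus}--\ref{lem:sand} by rewriting the consistency relation as the exact identity $R^{[\ell]}-R^{[\ell+1]}=g(\lambda_{\ell+1})/\big(2K_{\ell+1}\sqrt{Q^{[\ell]}}\big)$ with $g(\lambda)=\lambda(1-V(\lambda))/V(\lambda)$, and bound $g\in[\tfrac12,2]$ directly; this is the same inequality on $V$, just repackaged so that no concavity/monotonicity comparison lemma is needed.

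Two small factual errors should be corrected, although neither damages the argument. First, the claimed ``Amos lower bound'' $V(\lambda)\ge\lambda/\sqrt{1+\lambda^2}$ is false (for small $\lambda$ it would give $V(\lambda)\gtrsim\lambda$, whereas $V(\lambda)\sim\lambda/2$; at $\lambda=1$ it would give $V(1)\ge 0.707$ while $V(1)\approx 0.446$). The correct bound (Laforgia--Natalini, cited in the paper) is $V(\lambda)\ge\lambda/\big(1+\sqrt{1+\lambda^2}\big)$, and the inequality you actually need, $1+\sqrt{1+\lambda^2}\le\lambda+2$, i.e.\ $\sqrt{1+\lambda^2}\le\lambda+1$, is trivially true, so the conclusion $g\le 2$ survives. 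Second, with the paper's convention $I_2=I_0''$ one has $I_2/I_0 = 1 - V(\lambda)/\lambda$ (see \eqref{eq:pltalt}), hence $2I_2/I_0-1 = 1-2V(\lambda)/\lambda$, not $1-4V(\lambda)/\lambda$; again this does not affect the only fact you need, namely $2W(\lambda)-1\in[0,1)$, which keeps $Q^{[\ell]}$ in $[\tfrac12,1]$. Finally, for class (1) you should spell out the termination step $\ell+1=k_*$, where the identity for $g$ degenerates because $\lambda_{k_*}=0$: there the absence of a positive fixed point forces $R^{[k_*-1]}K_{k_*}\sqrt{Q^{[k_*-1]}}\le 1$, hence $R^{[k_*-1]}-R^{[k_*]}=R^{[k_*-1]}\le\sqrt{2}/K_{k_*}$, which preserves the upper bound.
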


\noindent
In universality classes (2) and (3) we have $\lim_{k\to\infty} \sigma_k=0$. Depending on how 
fast $k \mapsto K_k$ grows, various speeds of convergence are possible: logarithmic, polynomial, 
exponential, superexponential.


\section{Multi-scaling for the block average phases}
\label{sec:thmscalphase}

In Section~\ref{sec:prep} we explain the heuristics behind Theorem~\ref{thm:slowangle}. The diffusive 
scaling of the average phase in the mean-field noisy Kuramato model, as shown in the first line of 
\eqref{eq:rpsitimescale}, is a key tool in our analysis of the multi-scaling of the block average 
phases in the hierarchical noisy Kuramoto model, stated in Conjecture~\ref{thm:scalphasend}. The 
justification for the latter is given in Section~\ref{sec:multiscale}.


\subsection{Diffusive scaling of the average phase for mean-field Kuramato}
\label{sec:prep}

\begin{proof}
For the heuristic derivation of the second line of \eqref{eq:rpsitimescale} we combine 
\eqref{eq:McKean}--\eqref{eq:orderpar}, to obtain
\begin{equation}
\begin{aligned}
\frac{\ddd}{\ddd t} r(t) &= \int_0^{2\pi} \ddd\theta\,\cos\theta\\
&\qquad \times \left\{-\frac{\partial}{\partial \theta}\,\Big[p_\lambda(t;\theta)\,
\big\{Kr(t)\sin[\psi(t) - \theta]\big\}\Big] 
+ \frac{1}{2}\frac{\partial^{2}}{\partial \theta^{2}}\,p_\lambda(t;\theta)\right\}
\end{aligned}
\end{equation}
with $\lambda=2Kr$ and $p_\lambda(t;\theta)=p_\lambda(t;\theta,0)$ (recall that $\omega
\equiv 0$). After partial integration with respect to $\theta$ this becomes (use that $\theta 
\mapsto p_\lambda(t;\theta)$ is periodic)
\begin{equation}
\label{eq:drdt}
\frac{\ddd}{\ddd t} r(t) = \int_0^{2\pi} \ddd\theta\,
p_\lambda(t;\theta)\,\left\{(-\sin\theta)\,Kr(t)\sin(-\theta)
+ (-\cos\theta)\,\frac{1}{2}\right\},
\end{equation}
where we use that $\psi(t)=\psi(0)=0$. Define
\begin{equation}
\label{qdef}
q(t) = \int_0^{2\pi} \ddd\theta\,p_\lambda(t;\theta)\,\cos^2\theta.
\end{equation}
Then \eqref{eq:drdt} reads
\begin{equation}
\frac{\ddd}{\ddd t} r(t) =  \left[K(1-q(t))-\frac{1}{2}\right]\,r(t).
\end{equation}
We know that
\begin{equation}
\lim_{t\to\infty} q(t) = q = \int_0^{2\pi} \ddd\theta\,p_\lambda(\theta)\,\cos^2\theta
\end{equation} 
with (put $\omega \equiv 0$ in \eqref{eq:stat})
\begin{equation}
\label{eq:plt}
p_\lambda(\theta) = \frac{\eee^{\lambda\cos\theta}}{\int_0^{2\pi} \ddd\phi\,\eee^{\lambda\cos\phi}}.
\end{equation}
Note that $K(1-q)-\frac{1}{2}=0$ because $\lambda=2Kr$ and
\begin{equation}
\label{eq:pltalt}
\int_0^{2\pi} \ddd\theta\, p_\lambda (\theta) \sin^2\theta 
= (1/\lambda) \int_0^{2\pi} \ddd\theta\, p_\lambda(\theta) \cos\theta = r/\lambda
\end{equation} 
by partial integration. Hence $\lim_{t\to\infty} r(t)=r$.  (The  fine details of the relaxation are delicate, 
depend on the full solution of the McKean-Vlasov equation in \eqref{eq:McKean}, but are of no concern 
to us here.)
 
For the derivation of the first line of \eqref{eq:rpsitimescale} we use the symmetry of the equilibrium 
distribution (recall \eqref{eq:stat}--\eqref{eq:ABdef}), i.e.,
\begin{equation}
\label{eq:preequisym}
p_\lambda(\theta) = p_\lambda(-\theta),
\end{equation} 
together with the fact that $x \mapsto \cos x$ is a symmetric function and $x \mapsto \sin x$ is 
an asymmetric function.
 
Write the definition of the order parameter as
\begin{equation}
\label{rpsirel}
r_N = \frac{1}{N} \sum_{j=1}^N \eee^{\im(\theta_j-\psi_N)} 
\end{equation}
and compute
\begin{equation}
\frac{\partial r_N}{\partial\theta_i} = \frac{\im}{N}\,\eee^{\im(\theta_i-\psi_N)} 
- \im\,\frac{\partial\psi_N}{\partial \theta_k}\,r_N.
\end{equation}
Collecting the real and the imaginary part, we get
\begin{equation}
\frac{\partial \psi_N}{\partial\theta_i} = \frac{1}{Nr_N} \cos(\psi_N-\theta_i),
\qquad \frac{\partial r_N}{\partial\theta_i} = \frac{1}{N} \sin(\psi_N-\theta_i).
\end{equation}
One further differentiation gives
\begin{equation}
\begin{aligned}
\frac{\partial^2 \psi_N}{\partial\theta_i^2}
&= - \frac{1}{Nr_N^2} \frac{\partial r_N}{\partial\theta_i} \cos(\psi_N-\theta_i)
- \frac{1}{Nr_N} \left[\frac{\partial \psi_N}{\partial\theta_i}-1\right] \cos(\psi_N-\theta_i)\\
&= - \frac{2}{(Nr_N)^2} \sin(\psi_N-\theta_i) \cos(\psi_N-\theta_i)
+ \frac{1}{Nr_N} \sin(\psi_N-\theta_i),
\end{aligned} 
\end{equation}
plus a similar formula for $\frac{\partial^2 r_N}{\partial\theta_i^2}$ (which we will not need).
Thus, It\^o's rule applied to \eqref{eq:orderparfiniteN} yields the expression
\begin{equation}
\label{eq:preito}
\ddd\psi_N(t) 
= \sum_{i=1}^N \frac{\partial\psi_N}{\partial \theta_i}(t)\,\ddd\theta_i(t) 
+ \frac{1}{2} \sum_{i=1}^N \frac{\partial^2\psi_N}{\partial \theta_i^2}(t)\,\big(\ddd\theta_i(t)\big)^2 
\end{equation} 
with 
\begin{eqnarray}
\label{eq:preito1}
\frac{\partial\psi_N}{\partial \theta_i}(t)\,
&=& \frac{1}{Nr_N(t)} \cos\big[\psi_N(t) - \theta_i(t)\big],\\ \nonumber
\label{eq:preito2}
\frac{\partial^2\psi_N}{\partial \theta_i^2}(t)
&=& - \frac{2}{\big(Nr_N(t))^2} \sin\big[\psi_N(t) - \theta_i(t)\big]
\cos\big[\psi_N(t) - \theta_i(t)\big]\\ \nonumber
&& \qquad + \frac{1}{Nr_N(t)} \sin\big[\psi_N(t) - \theta_i(t)\big].
\end{eqnarray}
Inserting \eqref{eq:Knoiseorderpar} into \eqref{eq:preito}--\eqref{eq:preito2}, we get
\begin{equation}
\label{eq:prebeforelimit1}
\ddd\psi_N(t) = I(N;t)\,\ddd t + \ddd J(N;t)
\end{equation}
with
\begin{equation}
\label{eq:prebeforelimit2}
\begin{aligned}
I(N;t)
&= \left[\frac{K}{N}-\frac{1}{\big(N r_N(t)\big)^2}\right] 
\sum_{i=1}^N \sin\big[\psi_N(t) - \theta_i(t)\big] \cos\big[\psi_N(t) - \theta_i(t)\big],\\
\ddd J(N;t)
&= \frac{1}{Nr_N(t)} \sum_{i=1}^N \cos\big[\psi_N(t) - \theta_i(t)\big]\,\ddd W_i(t),
\end{aligned}
\end{equation}
where we use that $\sum_{i=1}^N \sin[\psi_N(t) - \theta_i(t)]=0$ by \eqref{eq:orderparfiniteN}. 
Multiply time by $N$, to get
\begin{equation}
\label{eq:prebeforelimit1alt}
\ddd\psi_N(Nt) = NI(N;Nt)\,\ddd t + \ddd J(N;Nt)
\end{equation}
with
\begin{equation}
\label{eq:prebeforelimit2alt}
\begin{aligned}
NI(N;Nt)
&= \left[K-\frac{1}{N\big(r_N(Nt)\big)^2}\right] 
\sum_{i=1}^N \sin\big[\psi_N(Nt) - \theta_i(Nt)\big] \cos\big[\psi_N(Nt) - \theta_i(Nt)\big],\\
\ddd J(N;Nt)
&= \frac{1}{Nr_N(Nt)} \sum_{i=1}^N \cos\big[\psi_N(Nt) - \theta_i(t)\big]\,\ddd W_i(Nt).
\end{aligned}
\end{equation}

Suppose that the system converges to a partially synchronized state, i.e., in distribution
\begin{equation}
\label{eq:rNNtalt}
\lim_{N\to\infty} r_N(Nt) = r > 0 \qquad \forall\,t>0
\end{equation}
(recall \eqref{eq:rNNt}). Then $\lim_{N\to\infty} 1/N(r_N(Nt))^2=0$, and so the first line in 
\eqref{eq:prebeforelimit2alt} scales like
\begin{equation}
\label{eq:maindiff*}
K \sum_{i=1}^N \sin\big[\psi_N(Nt) - \theta_i(Nt)\big] \cos\big[\psi_N(Nt) - \theta_i(Nt)\big],
\qquad N \to \infty.
\end{equation}
This expression is a large sum of terms whose average with respect to the noise is close to 
zero because of \eqref{eq:preequisym}. Consequently, this sum behaves diffusively. Also the 
second line in \eqref{eq:prebeforelimit2alt} behaves diffusively, because it is equal in distribution 
to
\begin{equation}
\label{eq:maindiff}
\frac{1}{r_N(Nt)} \sqrt{\frac{1}{N} \sum_{i=1}^N \cos^2\big[\psi_N(Nt) - \theta_i(Nt)\big]} 
\,\,\ddd W_*(t).
\end{equation}
It is shown in \cite{BGP14} that the two terms \emph{together} lead to the first line 
of \eqref{eq:rpsitimescale}, i.e., in distribution
\begin{equation}
\label{eq:psiNNt}
\lim_{N\to\infty} \psi_N(Nt) = \psi_*(t)
\end{equation}
with
\begin{equation} 
\psi_*(t) = D_*\,W_*(t), \qquad \psi_*(0) = \Phi = 0,
\end{equation} 
where $D_*=D_*(K)$ is the \emph{renormalized noise strength} given by \eqref{D*def} 
with $D=1$.\footnote{The proof is based on Hilbert-space techniques and is delicate. As pointed 
out below \cite[Corollary 1.3]{BGP14}: the proof requires control of the evolution of the empirical 
distribution of the oscillators, and so \eqref{eq:prebeforelimit1}--\eqref{eq:prebeforelimit2} 
alone cannot provide an alternative route to the estimates that are needed to prove the
convergence and the persistence of proximity in \eqref{eq:rNNtalt} and \eqref{eq:psiNNt}.}

Note that the term under the square root in \eqref{eq:maindiff} converges to $q$ 
defined in \eqref{qdef}. The latter holds because $\theta_i(Nt)$, $i=1,\ldots,N$, are 
asymptotically independent and $\theta_i(Nt)$ converges in distribution to $\theta \mapsto 
p_{\lambda}(\theta)$ \emph{relative} to the value of $\psi_N(Nt)$, which itself evolves 
only \emph{slowly} (on time scale $Nt$ rather than $t$).
\end{proof}

\begin{figure}[htbp]
\vspace{0.3cm}
\centering
\includegraphics[scale=1.0]{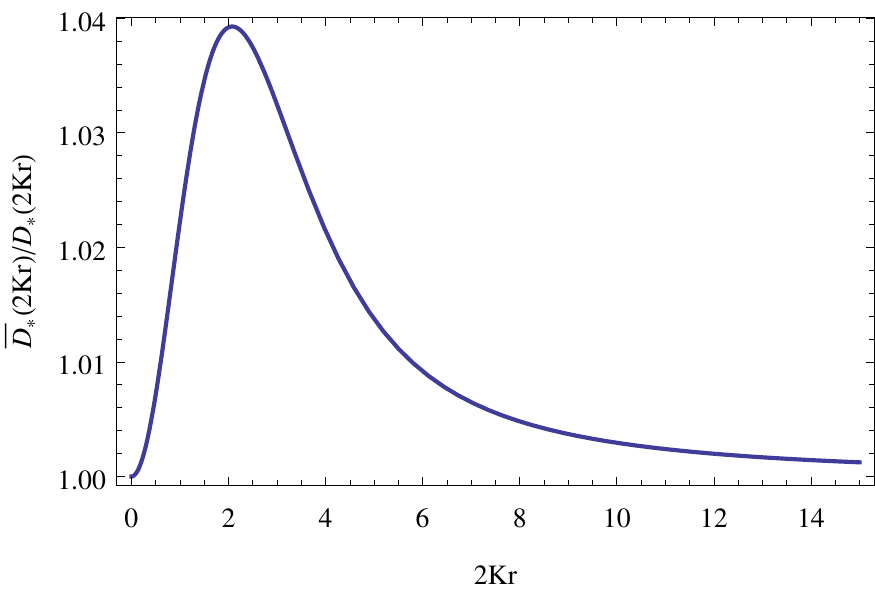}
\caption{Plot of $\bar D_*/D_*$ as a function of $2Kr$.}
\label{fig:twoD}
\end{figure} 

The second line of \eqref{eq:prebeforelimit2alt} scales in distribution to the diffusion equation
\begin{equation}
\label{eq:qrdef}
\lim_{N\to\infty} \ddd J(N;Nt) = \bar D_* \ddd W_*(t), 
\qquad \bar D_* = D_*(K) = \frac{\sqrt{q}}{r}, \qquad r=r(K).
\end{equation}
Inserting \eqref{eq:plt} and recalling \eqref{eq:Zdef} and \eqref{qdef}, we have
\begin{equation}
\label{barD*def}
\bar D_* = \bar D_*(K) = \frac{1}{r}\,\sqrt{\frac{I_2(2Kr)}{I_{0}(2Kr)}}.
\end{equation} 
Clearly, 
$D_* \neq \bar D_*$. Interestingly, however,
\begin{equation}
\label{ratioD}
1 \leq \frac{\bar D_*}{D_*} \leq C \quad \text{ uniformly in } K \quad \text{ with } C = 1.0392 \dots
\end{equation}
(G.\ Giacomin, private communication). Hence, not only does the first line of \eqref{eq:prebeforelimit2alt}
\emph{lower} the diffusion constant, the amount by which it does so is less than 4 percent (see 
Fig.~\ref{fig:twoD}). Further thoughts on the reason behind the discrepancy between $D_*$ 
and $\bar D_*$ can be found in Dahms~\cite[Section 3.5]{D02}.


\subsection{Multi-scaling of the block average phases for hierarchical Kuramoto}
\label{sec:multiscale}

We give the main idea behind Conjecture~\ref{thm:scalphasend}. The argument runs along 
the same line as in Section~\ref{sec:prep}, but is more involved because of the hierarchical 
interaction. 

What is crucial for the argument is the \emph{separation of space-time scales}: 
\begin{itemize}
\item
Each $k$-block consists of $N$ disjoint $(k-1)$-blocks, and evolves on a time scale that is 
$N$ times larger than the time scale on which the constituent blocks evolve. 
\item
In the limit as $N\to\infty$, the constituent $(k-1)$-blocks in each $k$-block rapidly achieve equilibrium
subject to the \emph{current} value of the $k$-block, which allows us to treat the $k$-blocks as a noisy 
mean-field Kuramoto model with coefficients that depend on their internal synchronization level, with an 
effective interaction that depends on the current value of the synchronization level of the $(k+1)$-block.
\item 
The $k$-block itself interacts with the other $k$-block's, with interaction strength 
$K_{k+1}$, while the interaction with the even larger blocks it is part of is negligible as 
$N\to\infty$. This interaction occurs through an effective interaction with the average value 
of the $k$-blocks which is exactly the value of the $(k+1)$-block.
\end{itemize}

If we want to observe the evolution of the $k$-blocks labeled $1 \leq i \leq N$ that make up the 
$(k+1)$-block (i.e., the $\Phi_{i}^{[k]}(t)$'s) on time scale $t$), then we must scale the actual 
oscillator time by $N^{k}$. The synchronization levels within the $\Phi_{i}^{[k]}(t)$'s, given by 
$R_{i}^{[k]}(Nt)$, are then moving over time $Nt$, since they must be synchronized before 
the $\Phi_{i}^{[k]}(t)$'s start to diffuse. This is taken care of by our choice of time scales in the 
hierarchical order parameter \eqref{eq:R}.

It\^o's rule applied to \eqref{eq:R} with $\eta=0^\N$ gives
\begin{equation}
\label{eq:ito}
\ddd\Phi_{0}^{[k]}(t) 
= \sum_{\zeta\in B_{k}(0)} \frac{\partial\Phi_{0}^{[k]}}{\partial \theta_{\zeta}}(t)\,
\ddd\theta_{\zeta}(N^{k}t) + \frac{1}{2} \sum_{\zeta \in B_{k}(0)} 
\frac{\partial^{2}\Phi_{0}^{[k]}}{\partial \theta_{\zeta}^{2}}(t)\,\big(\ddd\theta_{\zeta}(N^{k}t)\big)^2
\end{equation} 
where we have suppressed the $N$-dependence in order to lighten the notation, writing  
$\Phi_{0}^{[k]}=\Phi_{0, N}^{[k]}$ and $R_{0}^{[k]}=R_{0, N}^{[k]}$. The derivatives are 
(compare with \eqref{eq:preito1})
\begin{eqnarray}
\label{eq:ito1}
\frac{\partial \Phi_{0}^{[k]}}{\partial \theta_{\zeta}}(t)
&=& \frac{1}{N^{k}R_{0}^{[k]}(Nt)}\cos\big[\Phi_{0}^{[k]}(t) - \theta_{\zeta}(N^kt)\big],\\
\label{eq:ito2}
\frac{\partial^{2} \Phi_{0}^{[k]}}{\partial \theta_{\zeta}^{2}}(t) 
&=& - \frac{2}{\big[N^{2k}R_{0}^{[k]}(Nt)\big]^{2}}\sin\big[\Phi_{0}^{[k]}(t) - \theta_{\zeta}(N^kt)\big]
\cos\big[\Phi_{0}^{[k]}(t) - \theta_{\zeta}(N^kt)\big]\\ \nonumber
&&\qquad + \frac{1}{N^{k}R_{0}^{[k]}(Nt)}\sin\big[\Phi_{0}^{[k]}(t) - \theta_{\zeta}(N^kt)\big].
\end{eqnarray}
Inserting \eqref{eq:remodel}, we find
\begin{equation}
\label{eq:beforelimit1}
d\Phi_{0}^{[k]}(t) = \big[I_1(k,N;t) + I_2(k,N;t)\big]\, \ddd t + \ddd J(k,N;t)
\end{equation}
with
\begin{equation}
\label{eq:beforelimit2}
\begin{aligned}
I_1(k,N;t) &= \frac{1}{R_{0}^{[k]}(Nt)} \sum_{\ell\in\N} \frac{1}{N^{\ell-1}}\,
\Big(K_{\ell} - \frac{K_{\ell+1}}{N^{2}}\Big)\\
&\hspace{-1cm}\times \sum_{\zeta\in B_{k}(0)}
R_{\zeta}^{[\ell]}(N^{1+k-\ell}t) \sin\big[\Phi_{\zeta}^{[\ell]}(N^{k-\ell}t) - \theta_{\zeta}(N^{k}t)\big]
\cos\big[\Phi_{0}^{[k]}(t) - \theta_{\zeta}(N^{k}t)\big],\\
I_2(k,N;t) &= - \frac{1}{N^k\big[R_{0}^{[k]}(Nt)\big]^{2}} \sum_{\zeta\in B_{k}(0)}
\sin\big[\Phi_{0}^{[k]}(t) - \theta_{\zeta}(N^{k}t)\big] \cos\big[\Phi_{0}^{[k]}(t) - \theta_{\zeta}(N^{k}t)\big],\\
\ddd J(k,N;t) &= \frac{1}{N^{k/2}R_{0}^{[k]}(Nt)} \sum_{\zeta\in B_{k}(0)}
\cos\big[\Phi_{0}^{[k]}(t) - \theta_{\zeta}(N^{k}t)\big]\,\ddd W_{\zeta}(t).
\end{aligned}
\end{equation}

Our goal is to analyse the expressions in \eqref{eq:beforelimit2} in the limit as $N\to\infty$, and 
show that \eqref{eq:beforelimit1} converges to the SDE in \eqref{eq:thm:scalphasend} subject
to the assumption that the $k$-block converges to a partially synchronized state, i.e.,
\begin{equation}
\lim_{N\to\infty} R_0^{[k]}(Nt) = R^{[k]} > 0 \quad \forall \; t>0. 
\end{equation}
The key idea is that, in the limit as $N\to\infty$, the average phases of the $k$-blocks around $\zeta$
\emph{decouple} and converge in distribution to $\theta \mapsto p^{[k]}(\theta)$ for all $k\in\N_0$, 
just as for the noisy mean-field Kuramoto model discussed in Section~\ref{sec:prep}, with $p^{[k]}(\theta)$
of the same form as $p_\lambda(\theta)$ in \eqref{eq:plt} for a suitable $\lambda$ depending on $k$. 
This is the reason why a \emph{recursive structure} is in place, captured by the renormalization maps 
$\cT_{K_k}$, $k\in\N$. 

Along the way we need the quantities
\begin{equation}
\label{eq:barRQdef}
\begin{aligned}
R_{0}^{[k]}(Nt) &= \frac{1}{N^{k}} 
\sum_{\zeta \in B_{k}(0)} \cos\big[\Phi_{0}^{[k]}(t) - \theta_{\zeta}(N^{k}t)\big],\\
Q_{0}^{[k]}(Nt) &= \frac{1}{N^{k}} 
\sum_{\zeta \in B_{k}(0)} \cos^2\big[\Phi_{0}^{[k]}(t) - \theta_{\zeta}(N^{k}t)\big].
\end{aligned}
\end{equation}
We also use that for all $k \in \N_0$,
\begin{equation}
\label{eq:equisym}
p^{[k]}(\theta) = p^{[k]}(-\theta),
\end{equation}
as well as the fact that for all $k \in \N$ and $\ell \geq k$,
\begin{equation}
\label{eq:embedding}
\begin{array}{l}
R_{\zeta}^{[\ell]}(Nt)=R_{0}^{[\ell]}(Nt),\\
\Phi_{\zeta}^{[\ell]}(Nt) = \Phi_{0}^{[\ell]}(Nt),
\end{array}
\qquad \forall \, \zeta \in B_k(0).
\end{equation}
In addition, we use the trigonometric identities
\begin{equation}
\label{trig}
\begin{array}{ll}
\sin(a+b) &= \sin a\cos b +\cos a \sin b,\\
\cos(a+b) &= \cos a\cos b - \sin a \sin b,
\end{array}
\qquad a,b \in \R,
\end{equation} 
to simplify terms via a \emph{telescoping argument}. 

Before we embark on our multi-scale analysis, we note that the expressions in 
\eqref{eq:beforelimit1}--\eqref{eq:beforelimit2} simplify somewhat as we take the limit $N\to\infty$. 
First, in $I_1(k,N;t)$ the term $K_{\ell+1}/N^{2}$ is asymptotically negligible compared to 
$K_\ell$, while the sum over $\ell$ can be restricted to $1 \leq \ell \leq k+1$ because $|B_k(0)| 
= N^k$. Second, $I_2(k,N;t)$ is asymptotically negligible because of \eqref{eq:equisym} and 
the fact that $\sin\theta\cos\theta = - \sin(-\theta)\cos(-\theta)$. Thus, we have, in distribution,
 \begin{equation}
\label{eq:beforelimitsimplified}
\ddd\Phi_{0}^{[k]}(t) = \big\{[1+o(1)]\,I_{N}^{[k]}(t)+o(1)\big\}\, \ddd t + \ddd J_{N}^{[k]}(t), \qquad N\to\infty,
\end{equation}
with
\begin{equation}
\begin{aligned}
\label{IJkNt}
I_{N}^{[k]}(t) &= \frac{1}{R_{0}^{[k]}(Nt)} \sum_{\ell=1}^{k+1} \frac{K_\ell}{N^{\ell-1}}\\
&\hspace{-1cm}\times \sum_{\zeta\in B_{k}(0)}
R_{\zeta}^{[\ell]}(N^{1+k-\ell}t) \sin\big[\Phi_{\zeta}^{[\ell]}(N^{k-\ell}t) - \theta_{\zeta}(N^{k}t)\big]
\cos\big[\Phi_{0}^{[k]}(t) - \theta_{\zeta}(N^{k}t)\big],\\
\ddd J_{N}^{[k]}(t) &= \frac{1}{R_{0}^{[k]}(Nt)}\,\sqrt{Q_{0}^{[k]}(Nt)}\,\ddd W^{[k]}(t).
\end{aligned}
\end{equation}
In the last line we use that $(W_\zeta(t))_{t \geq 0}$, $\zeta \in B_k(0)$, are i.i.d.\ and write
$(W^{[k]}(t))_{t \geq 0}$ to denote an auxiliary Brownian motion associated with level $k$.

The \emph{truncation approximation} consists of throwing away the terms with $1 \leq \ell \leq k$ 
and keeping only the terms with $\ell=k+1$.


\subsubsection{$\bullet$ Level $k=1$}
\label{subsec:k=1}

For $k=1$, by \eqref{eq:embedding} the first line of \eqref{IJkNt} reads
\begin{align}
&I_{N}^{[1]}(t) = K_{1}\,\sum_{\zeta\in B_{1}(0)} \sin\big[\Phi_{0}^{[1]}(t) - \theta_{\zeta}(Nt)\big]
\cos\big[\Phi_{0}^{[1]}(t) - \theta_{\zeta}(Nt)\big]\\ \nonumber
&\qquad + K_{2}\,\frac{R_{0}^{[2]}(t)}{R_{0}^{[1]}(Nt)}\,\frac{1}{N} \sum_{\zeta\in B_{1}(0)} 
\sin\big[\Phi_{0}^{[2]}(N^{-1}t) - \theta_{\zeta}(Nt)\big]
\cos\big[\Phi_{0}^{[1]}(t) - \theta_{\zeta}(Nt)\big].
\end{align}
We telescope the sine. Using \eqref{trig} with $a=\Phi_{0}^{[2]}(N^{-1}t)-\Phi_{0}^{[1]}(t)$ 
and $b=\Phi_{0}^{[1]}(t)-\theta_{\zeta}(Nt)$, we obtain
\begin{align}
\label{telescope1}
I_{N}^{[1]}(t) &= K_{1}\,\sum_{\zeta\in B_{1}(0)} \sin\big[\Phi_{0}^{[1]}(t) - \theta_{\zeta}(Nt)\big]
\cos\big[\Phi_{0}^{[1]}(t) - \theta_{\zeta}(Nt)\big]\\ \nonumber
&\qquad + K_{2}\,\frac{R_{0}^{[2]}(t)}{R_{0}^{[1]}(Nt)}
\sin\big[\Phi_{0}^{[2]}(N^{-1}t) - \Phi_{0}^{[1]}(t)\big]\\ \nonumber 
&\qquad\qquad \times \frac{1}{N}
\sum_{\zeta\in B_{1}(0)}\cos^{2}\big[\Phi_{0}^{[1]}(t) - \theta_{\zeta}(Nt)\big]\\ \nonumber
&\qquad + K_{2}\,\frac{R_{0}^{[2]}(t)}{R_{0}^{[1]}(Nt)}\
\cos\big[\Phi_{0}^{[2]}(N^{-1}t) - \Phi_{0}^{[1]}(t)\big]\\ \nonumber
&\qquad\qquad \times \frac{1}{N} \sum_{\zeta\in B_{1}(0)}\sin\big[\Phi_{0}^{[1]}(t) - \theta_{\zeta}(Nt)\big]
\cos\big[\Phi_{0}^{[1]}(t) - \theta_{\zeta}(Nt)\big].
\end{align}
On time scale $Nt$, the oscillators in the 1-block have synchronized, and hence the last sum
vanishes in the limit $N\to\infty$ by the symmetry property in \eqref{eq:equisym} for $k=1$. 
Therefore we have
\begin{align}
\label{I1Ntrewrite}
I_{N}^{[1]}(t) &= K_{1}\,\sum_{\zeta\in B_{1}(0)} \sin\big[\Phi_{0}^{[1]}(t) - \theta_{\zeta}(Nt)\big]
\cos\big[\Phi_{0}^{[1]}(t) - \theta_{\zeta}(Nt)\big]\\ \nonumber
&\qquad + K_{2}\,\frac{R_{0}^{[2]}(t)\,Q^{[1]}_{0}(Nt)}{R_{0}^{[1]}(Nt)}
\sin\big[\Phi_{0}^{[2]}(N^{-1}t) - \Phi_{0}^{[1]}(t)\big] + o(1).
\end{align}
Recalling \eqref{IJkNt} we further have
\begin{equation}
\label{JNrewrite}
\ddd J_{N}^{[1]}(t) = \frac{1}{R_{0}^{[1]}(Nt)}\,\sqrt{Q_{0}^{[1]}(Nt)}\,\ddd W^{[1]}(t)
\end{equation}
with 
\begin{equation}
Q_{0}^{[1]}(Nt) = \frac{1}{N} \sum_{\zeta \in B_{1}(0)} \cos^2\big[\Phi_{0}^{[1]}(t) - \theta_{\zeta}(Nt)\big].
\end{equation}
The first term in the right-hand side of \eqref{I1Ntrewrite} is the same as that in \eqref{eq:maindiff*} 
with $K=K_1$ and $\psi_N(Nt)= \Phi_{0}^{[1]}(t)$. The term in the right-hand side of \eqref{JNrewrite} 
is the same as that of \eqref{eq:maindiff} with $r_N(Nt)=R_{0}^{[1]}(Nt)$ and $W_*(t) = W^{[1]}(t)$. 
Together they produce, in the limit as $N\to\infty$, the same noise term as in the mean-field model, 
namely,
\begin{equation}
\mathcal{D}^{[1]}\,\ddd W^{[1]}(t) 
\end{equation}
with a \emph{renormalized noise strength} 
\begin{equation}
\label{eq:D*use}
\mathcal{D}^{[1]} = D_*(K_1)
\end{equation}
given by \eqref{D*def} with $D=1$, where we use that
\begin{equation}
\lim_{N\to\infty} R_{0}^{[1]}(Nt) = R^{[1]} = R^{[1]}(K_1),
\quad \lim_{N\to\infty} Q_{0}^{[1]}(Nt) = Q^{[1]} = Q^{[1]}(K_1) \quad \forall\; t>0.
\end{equation}
The second term in the right-hand side of \eqref{I1Ntrewrite} is precisely the Kuramoto-type 
interaction term of $\Phi_{0}^{[1]}(t)$ with the average phase of the oscillators in the 2-block 
at time $Nt$. Therefore, in the limit as $N\to\infty$, we end up with the limiting SDE 
\begin{align}
\ddd\Phi_{0}^{[1]}(t) = K_{2}\,\mathcal{E}^{[1]}\,R_{0}^{[2]}(t)\,
\sin\big[\Phi - \Phi_{0}^{[1]}(t)\big] + \mathcal{D}^{[1]}\,\ddd W^{[1]}(t)
\end{align}
with 
\begin{equation}
\mathcal{E}^{[1]} = \frac{Q^{[1]}}{R^{[1]}}.
\end{equation}
If we leave out the first term in the right-hand side of \eqref{I1Ntrewrite} (which as shown 
in \eqref{ratioD} may be done at the cost of an error of less than 4 percent), then we end 
up with the limiting SDE 
\begin{align}
\ddd\Phi_{0}^{[1]}(t) = K_{2}\,\bar{\mathcal{E}}^{[1]}\,R_{0}^{[2]}(t)\,
\sin\big[\Phi - \Phi_{0}^{[1]}(t)\big] + \bar{\mathcal{D}}^{[1]}\,\ddd W^{[1]}(t)
\end{align}
with $\bar{\mathcal{E}}^{[1]}=\mathcal{E}^{[1]}$ and
\begin{equation}
\label{eq:barD*use}
 \bar{\mathcal{D}}^{[1]} = \bar D_*(K_1) = \frac{\sqrt{Q^{[1]}}}{R^{[1]}}
\end{equation}
given by \eqref{barD*def} with $D=1$. Thus we have justified the SDE in \eqref{eq:thm:scalphasendapprox} 
for $k=1$. After a transient period we have $\lim_{t\to\infty} R_{0}^{[2]}(t) = R_{0}^{[2]}$. 

Note that, in the approximation where we leave out the first term in the right-hand side of \eqref{I1Ntrewrite},
the pair $(R^{[1]},Q^{[1]})$ takes over the role of the pair $(r,q)$ in the mean-field model. The latter are the 
unique solution of the consistency relation and recursion relation (recall \eqref{eq:Zdef}, \eqref{eq:plt},
\eqref{eq:pltalt} and \eqref{eq:qrdef})
\begin{equation}
r = \frac{I_1(2Kr)}{I_0(2Kr)}, \qquad q =  \frac{I_2(2Kr)}{I_0(2Kr)}.
\end{equation}
These can be summarised as saying that $(r,q)=\cT_K(1,1)$, with $\cT_K$ the renormalization map 
introduced in Definition~\ref{def:TKnd}. Thus we see that 
\begin{equation}
\label{eq:TKstage}
(R^{[1]},Q^{[1]})=\cT_{K_1}(1,1),
\end{equation} 
which explains why $\cT_{K_1}$ comes on stage.


\subsubsection{$\bullet$ Levels $k \geq 2$}
\label{subsec:k=2}

For $k \geq 2$, by \eqref{eq:embedding} the term with $\ell=k+1$ in $I_{N}^{[k]}(t)$ in the first line 
of \eqref{IJkNt} equals
\begin{equation}
\begin{aligned}
&I_{N}^{[k]}(t)|_{\ell=k+1}\\ 
&= K_{k+1}\,\frac{R_{0}^{[k+1]}(t)}{R_{0}^{[k]}(Nt)}\,\frac{1}{N^k} \sum_{\zeta\in B_{k}(0)} 
\sin\big[\Phi_{0}^{[k+1]}(N^{-1}t) - \theta_{\zeta}(N^kt)\big]
\cos\big[\Phi_{0}^{[k]}(t) - \theta_{\zeta}(N^kt)\big].
\end{aligned}
\end{equation}
We again telescope the sine. Using \eqref{trig}, this time with $a=\Phi_{0}^{[k+1]}(N^{-1}t) 
-\Phi_{0}^{[k]}(t)$ and $b=\Phi_{0}^{[k]}(t)-\theta_{\zeta}(N^kt)$, we can write
\begin{equation}
\begin{aligned}
I_{N}^{[k]}(t)|_{\ell=k+1}
&= K_{k+1}\,\frac{R_{0}^{[k+1]}(t)}{R_{0}^{[k]}(Nt)}
\sin\big[\Phi_{0}^{[k+1]}(N^{-1}t)-\Phi_{0}^{[k]}(t)\big]\\
&\qquad\qquad \times \frac{1}{N^k} \sum_{\zeta\in B_{k}(0)} 
\cos^2\big[\Phi_{0}^{[k]}(t)-\theta_{\zeta}(N^kt)\big]\\
&\qquad+ K_{k+1}\,\frac{R_{0}^{[k+1]}(t)}{R_{0}^{[k]}(Nt)}
\sin\big[\Phi_{0}^{[k+1]}(N^{-1}t)-\Phi_{0}^{[k]}(t)\big]\\
&\qquad\qquad \times \frac{1}{N^k} \sum_{\zeta\in B_{k}(0)} 
\sin\big[\Phi_{0}^{[k]}(t)-\theta_{\zeta}(N^kt)\big]\cos\big[\Phi_{0}^{[k]}(t)-\theta_{\zeta}(N^kt)\big].
\end{aligned}
\end{equation}
By the symmetry property in \eqref{eq:equisym}, the last term vanishes as $N\to\infty$, 
and so we have
\begin{equation}
I_{N}^{[k]}(t)|_{\ell=k+1} = K_{k+1}\,\frac{R_{0}^{[k+1]}(t)\,Q^{[k]}_{0}(Nt)}{R_{0}^{[k]}(Nt)}\,
\sin\big[\Phi_{0}^{[k+1]}(N^{-1}t) - \Phi_{0}^{[k]}(t)\big] + o(1).
\end{equation}
Using that 
\begin{equation}
\lim_{N\to\infty} R_{0}^{[k]}(Nt) = R^{[k]},
\qquad \lim_{N\to\infty} Q_{0}^{[k]}(Nt) = Q^{[k]} \qquad \forall\,t>0,
\end{equation}
we obtain
\begin{equation}
I_{N}^{[k]}(t)|_{\ell=k+1} = K_{k+1}\,\frac{Q^{[k]}}{R^{[k]}}\,R_{0}^{[k+1]}(t)\,
\sin\big[\Phi - \Phi_{0}^{[k]}(t)\big] + o(1),
\end{equation}
which is the Kuramoto-type interaction term of $\Phi_{0}^{[k]}(t)$ with the average phase of 
the oscillators in the $(k+1)$-block at time $N^kt$. The noise term in \eqref{IJkNt} scales like
\begin{equation}
\ddd J_{N}^{[k]}(t) = \frac{1}{R^{[k]}}\,\sqrt{Q^{[k]}}\,\ddd W^{[k]}(t) + o(1).
\end{equation}
Hence we end up with
\begin{equation}
I_{N}^{[k]}(t)|_{\ell=k+1} \ddd t + \ddd J_{N}^{[k]}(t) 
= K_{k+1}\,\frac{Q^{[k]}}{R^{[k]}}\,R_{0}^{[k+1]}(t)\,
\sin\big[\Phi - \Phi_{0}^{[k]}(t)\big] + \frac{\sqrt{Q^{[k]}}}{R^{[k]}}\,\ddd W^{[k]}(t) + o(1).
\end{equation}
Thus we have justified the SDE in \eqref{eq:thm:scalphasendapprox} for $k \geq 2$, with
$\bar{\mathcal{E}}^{[k]}$ and $\bar{\mathcal{D}}^{[k]}$ given by \eqref{eq:EDdef}. Note
that 
\begin{equation}
(R^{[k]},Q^{[k]}) = \cT_{K_k}(R^{[k-1]},Q^{[k-1]}), 
\end{equation}
in full analogy with \eqref{eq:TKstage}.

For $k \geq 2$ the term with $\ell=k$ equals
\begin{equation}
\label{I1k}
I_{N}^{[k]}(t)|_{\ell=k} = K_k \sum_{i=1}^N \frac{1}{N^{k-1}} \sum_{\zeta \in B_{k-1}(i)}\\
\sin\big[\Phi_{0}^{[k]}(t) - \theta_{\zeta}(N^{k}t)\big]
\cos\big[\Phi_{0}^{[k]}(t) - \theta_{\zeta}(N^{k}t)\big],
\end{equation}
where $B_{k-1}(i)$, $1 \leq i \leq N$, are the $(k-1)$-blocks making up the $k$-block $B_k(0)$,
and we use that $(R_{\zeta}^{[k]}(t),\Phi_{\zeta}^{[k]}(t)) = (R_{0}^{[k]}(t),\Phi_{0}^{[k]}(t))$ for
all $\zeta \in B_{k-1}(i)$ and all $1 \leq i \leq N$. The sum in \eqref{I1k} has a similar form as 
the first term in the right-hand side of \eqref{I1Ntrewrite}, but now with the $1$-block replaced 
by $N$ copies of $(k-1)$-blocks. This opens up the possibility of a finer approximation analogous 
to the one obtained by using \eqref{eq:D*use} instead of \eqref{eq:barD*use}. As we argued in 
Section~\ref{sec:prep}, the improvement should be minor (recall \eqref{ratioD}).


\section{Universality classes and synchronization levels}
\label{sec:thmclasses}

In Section~\ref{sec:renmapprop} we derive some basic properties of the renormalization map 
(Lemmas~\ref{lem:kincreasingR}--\ref{lem:components} below). In Section~\ref{sec:nodisorder} 
we prove Theorem~\ref{thm:classesnd}. The proof relies on convexity and sandwich estimates 
(Lemmas~\ref{lem:Zprop}--\ref{lem:sand} below).


\subsection{Properties of the renormalization map}
\label{sec:renmapprop}

For $\lambda \in [0,\infty)$, define
\begin{equation}
\label{eq:VWdef}
\begin{aligned}
V(\lambda) &= \int_0^{2\pi} \ddd\theta\,\cos\theta\,p_{\lambda}(\theta) = \frac{I_{1}(\lambda)}{I_{0}(\lambda)},\\
W(\lambda) &= \int_0^{2\pi} \ddd\theta\,\cos^{2}\theta\,p_{\lambda}(\theta)= \frac{I_{2}(\lambda)}{I_{0}(\lambda)},
\end{aligned}
\end{equation}
where the probability distribution $p_\lambda(\theta)$ is given by \eqref{eq:stat} with 
$\omega \equiv 0$ and $D=1$. The renormalization map $\cT_K$ in \eqref{eq:TKmap} 
can be written as $(\bar{R},\bar{Q})=\cT_K(R, Q)$ with
\begin{align}
\label{eq:RVrel}
\bar{R} &= RV(\lambda),\nonumber\\
\bar{Q}-\tfrac12 &= (Q-\tfrac12)\big[2W(\lambda) - 1\big],
\end{align}
and $\lambda = 2K\bar{R}\sqrt{Q}$. It is known that $\lambda \mapsto V(\lambda)$ is
strictly increasing and strictly convex, with $V(0)=0$ and $\lim_{\lambda\to\infty} V(\lambda) 
= 1$. 

\begin{lemma}
\label{lem:kincreasingR}
The map $K \mapsto \bar{R}(R, K)$ is strictly increasing.
\end{lemma}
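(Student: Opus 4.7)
The consistency relation $\bar R=R\,V(2K\bar R\sqrt Q)$ implicitly defines $\bar R=\bar R(R,K)$ for fixed $R,Q$. My plan is to reparametrize via $\lambda:=2K\bar R\sqrt Q$, which recasts the consistency relation as
\begin{equation*}
\frac{V(\lambda)}{\lambda} \;=\; \frac{1}{2RK\sqrt Q}, \qquad \bar R \;=\; R\,V(\lambda).
\end{equation*}
In this form, determining $\bar R(K)$ reduces to locating the positive root $\lambda=\lambda(K)$ of the first equation and then reading off $\bar R(K)=R\,V(\lambda(K))$.

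The key analytic input I would use is that $\lambda\mapsto V(\lambda)/\lambda$ is strictly decreasing on $(0,\infty)$, descending from $V'(0)=\tfrac12$ at $0^+$ to $0$ at infinity. Granting this, as $K$ strictly increases the right-hand side $1/(2RK\sqrt Q)$ strictly decreases, which forces $\lambda(K)$ to strictly increase; then $\bar R(K)=R\,V(\lambda(K))$ is strictly increasing as a composition of strictly increasing functions, and the lemma follows.

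The strict monotonicity of $V(\lambda)/\lambda$ is the part I expect will require the most care. It follows from strict concavity of $V$ on $(0,\infty)$ together with $V(0)=0$: the tangent-line inequality at the origin yields $\lambda V'(\lambda)<V(\lambda)$, and hence $(V(\lambda)/\lambda)'<0$. Strict concavity of the Bessel ratio $V=I_1/I_0$ is classical; alternatively, the recurrence $I_{n-1}(\lambda)-I_{n+1}(\lambda)=(2n/\lambda)I_n(\lambda)$ yields $V'(\lambda)=1-V(\lambda)/\lambda-V(\lambda)^2$, from which $(V(\lambda)/\lambda)'<0$ can be verified directly. The only other subtlety is that the consistency relation always admits the trivial root $\bar R=0$; the lemma should be read as applying to the non-trivial branch $\lambda(K)>0$, which exists and is unique for $K>1/(R\sqrt Q)$ and is the branch intended in the definition of $\cT_K$.
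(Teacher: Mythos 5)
Your proof is correct, and it takes a genuinely different route from the paper's. The paper differentiates the consistency relation $\bar R=R\,V(2K\bar R)$ implicitly with respect to $K$, isolates $\ddd\bar R/\ddd K$, and then shows the denominator $1-2KR\,V'(2K\bar R)$ is positive at the non-trivial fixed point because, for a concave map through the origin, the slope at the positive fixed point must lie strictly below $1$. You instead eliminate $\bar R$ altogether by the substitution $\lambda=2K\bar R\sqrt Q$, so that the fixed-point condition becomes $V(\lambda)/\lambda=1/(2RK\sqrt Q)$, and the monotonicity of $K\mapsto\bar R$ reduces to two clean facts: $V(\lambda)/\lambda$ is strictly decreasing (the secant--tangent inequality for a strictly concave $V$ with $V(0)=0$, or your direct Bessel-recurrence identity $V'=1-V/\lambda-V^{2}$), and $V$ is strictly increasing. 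Both proofs ultimately rest on concavity of $V$, but your reparametrization avoids the implicit-function-theorem bookkeeping entirely and makes the mechanism transparent: larger $K$ pushes the root $\lambda(K)$ rightward along a strictly decreasing curve. You also handle, explicitly, the point the paper leaves tacit: the relation always has the trivial root $\bar R=0$, and strict monotonicity is a statement about the non-trivial branch, which exists precisely when $1/(2RK\sqrt Q)<V'(0^+)=\tfrac12$. One small remark: the paper's sentence in Section~\ref{sec:nodisorder} asserting $V$ is ``strictly convex'' is a typo (it contradicts $V(0)=0$, $V(\infty)=1$, and is inconsistent with its own later use of concavity in Lemma~\ref{lem:sand}); your reading of $V$ as strictly concave is the correct one.
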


\begin{proof}
The derivative of $\bar{R}$ w.r.t.\ $K$ exists by the implicit function theorem, so that
\begin{eqnarray}
\label{eq:kincrease}
&&\frac{d\bar{R}}{d K} = 2RV'(2K\bar{R})\,\left[\bar{R} + K \frac{d\bar{R}}{d K}\right],\nonumber\\
&&\frac{d\bar{R}}{d K}\,\big[1-2KRV'(2K\bar{R})\big] = 2R\bar{R}V'(2K\bar{R}). 
\end{eqnarray}
Note that $\bar{R}$ is the solution to $\bar{R} = RV(2K\bar{R})$, which is non-trivial only when 
$1<2RKV'(2K\bar{R})$ due to the concavity of the map $R \mapsto RV(2K\bar{R})$. This implies 
that $2KRV'(2K\bar{R})<1$ at the solution, which makes the term in \eqref{eq:kincrease} 
between square brackets positive. The claim follows since we proved previously that 
$R, \bar{R} \in [0,1)$ and $V'(2K\bar{R})>0$. 
\end{proof}

\begin{lemma}\label{lem:kincreasingQ}
The map $K \mapsto \bar{Q}(\bar{R}, K, Q)$ is strictly increasing.
\end{lemma}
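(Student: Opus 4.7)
The plan is to differentiate the second line of \eqref{eq:RVrel} with respect to $K$, viewing $\bar{Q}$ as the function of the three arguments $(\bar{R},K,Q)$ given explicitly by that line, with $\bar{R}$ and $Q$ treated as independent of $K$. Setting $\lambda = 2K\bar{R}\sqrt{Q}$, the chain rule immediately gives
\begin{equation*}
\frac{d\bar{Q}}{dK} \;=\; 2\,(Q-\tfrac12)\,W'(\lambda)\,\frac{d\lambda}{dK}
\;=\; 4\,(Q-\tfrac12)\,W'(\lambda)\,\bar{R}\sqrt{Q}.
\end{equation*}
The proof then reduces to three sign checks on the right-hand side.

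First, the assertion is vacuous when $Q=\tfrac12$, because then $\bar{Q}=\tfrac12$ for every $K$; one may and does assume $Q>\tfrac12$, so that $Q-\tfrac12>0$. Second, by hypothesis $\bar{R}\in(0,1)$ and $Q\in(\tfrac12,1]$, so the factor $\bar{R}\sqrt{Q}$ is strictly positive. The hard part will therefore be the third factor, namely the strict positivity of $W'(\lambda)$ for $\lambda>0$.

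For this last step, the cleanest route is to exploit the von Mises exponential-family structure $p_\lambda(\theta)\propto \eee^{\lambda\cos\theta}$. Differentiating \eqref{eq:VWdef} under the integral sign yields
\begin{equation*}
W'(\lambda) \;=\; \mathrm{Cov}_{p_\lambda}\!\big(\cos^{2}\theta,\cos\theta\big).
\end{equation*}
A short computation using the modified-Bessel-function recurrences (translating the paper's convention $I_{2}=I_{0}''$ into the classical normalisation via $2I_{n}'=I_{n-1}+I_{n+1}$) reduces the strict positivity of this covariance to the strict monotonicity of the Bessel ratio $\lambda\mapsto I_{2}(\lambda)/I_{0}(\lambda)$ on $(0,\infty)$, a classical Tur\'an-type property of modified Bessel functions. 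Combining the three sign checks yields $d\bar{Q}/dK>0$ whenever $Q>\tfrac12$, which is the content of the lemma.
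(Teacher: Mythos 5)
Your computation freezes $\bar R$ when differentiating in $K$, but in the context of Definition~\ref{def:TKnd} and of Lemma~\ref{lem:kincreasingR} the quantity $\bar R=\bar R(R,K)$ is \emph{itself} a function of $K$, defined implicitly by the consistency relation $\bar R = R\,V(2K\bar R\sqrt{Q})$. The derivative of $\bar Q$ along the renormalization map therefore has two contributions: the one you compute (holding $\bar R$ fixed) and an additional one coming through $\tfrac{d\bar R}{dK}$. The full chain rule, with $\lambda=2K\bar R\sqrt{Q}$, gives
\begin{equation*}
\frac{d\bar Q}{dK}
= (Q-\tfrac12)\,4\sqrt{Q}\,W'(\lambda)\,\Bigl[\bar R + K\,\frac{d\bar R}{dK}\Bigr],
\end{equation*}
of which your formula retains only the $\bar R$ term in the bracket. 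This is precisely why the paper's proof invokes Lemma~\ref{lem:kincreasingR}: it is needed to ensure that $K\,\tfrac{d\bar R}{dK}>0$, so that the whole bracket $[\bar R + K\,\tfrac{d\bar R}{dK}]$ is positive. To repair your argument you should either differentiate implicitly from the start, or add the second chain-rule contribution and then quote Lemma~\ref{lem:kincreasingR} for its sign.

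The rest of your reasoning is sound and in one respect more detailed than the paper. The sign checks on $(Q-\tfrac12)$ and $\bar R\sqrt{Q}$ are fine, and your covariance representation $W'(\lambda)=\mathrm{Cov}_{p_\lambda}(\cos^2\theta,\cos\theta)$, reduced via the Bessel recurrences to the strict monotonicity of the classical ratio $I_2/I_0$, is a legitimate and cleanly documented route to $W'(\lambda)>0$; the paper simply asserts this ``as proven before'' without spelling it out. Once the missing $K\,\tfrac{d\bar R}{dK}$ term is restored, your proof is complete and matches the paper's logic.
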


\begin{proof}
The derivative of $\bar{Q}$ w.r.t.\ $K$ exists by the implicit function theorem, so that
\begin{equation}
\frac{d\bar{Q}}{dK} = (Q - \tfrac{1}{2})\,4\sqrt{Q}\,
W'\big(2\sqrt{Q}K\bar{R}\big)\,\left[\bar{R} + K\frac{d\bar{R}}{dK}\right].
\end{equation}
We have that $(Q - \tfrac{1}{2})\sqrt{Q} \geq 0$ because $Q\in [\frac{1}{2}, 1)$, $W'(2\sqrt{Q}K
\bar{R})>0$ as proven before, and $[\bar{R} + K\frac{d\bar{R}}{dK}]>0$ as in the proof of 
Lemma~\ref{lem:kincreasingR}. The claim therefore follows. 
\end{proof}

\begin{lemma}
\label{lem:components}
The map $(R,Q)\mapsto (\bar{R}, \bar{Q})$ is non-increasing in both components, 
i.e., 
\begin{enumerate}[(i)]
\item[{\rm (i)}] $R \mapsto \bar{R}(K, R)$ is non-increasing.
\item[{\rm (ii)}] $Q \mapsto \bar{Q}(K, \bar{R}, Q)$ is non-increasing.
\end{enumerate}
\end{lemma}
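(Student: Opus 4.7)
The plan is to read both parts off directly from the consistency and recursion relations in \eqref{eq:RVrel}, by exploiting the fact that $V$ and $W$ in \eqref{eq:VWdef} are $p_\lambda$-expectations of bounded quantities:
\begin{equation*}
V(\lambda) = \langle \cos\theta\rangle_{p_\lambda} \in [0,1], \qquad W(\lambda) = \langle \cos^2\theta\rangle_{p_\lambda} \in [0,1], \qquad \lambda \geq 0,
\end{equation*}
where the inclusions follow from $|\cos\theta|,\cos^2\theta \leq 1$ together with the fact that $p_\lambda$ is a probability density (for $V(\lambda)\geq 0$ one also uses the symmetry $p_\lambda(\theta)=p_\lambda(-\theta)$ already noted in \eqref{eq:preequisym}). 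Both inequalities are strict for $\lambda<\infty$.

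For (i), I would simply quote the consistency relation $\bar R = R\,V(\lambda)$ with $\lambda = 2K\bar R\sqrt Q$; the bound $V(\lambda) \leq 1$ immediately gives $\bar R \leq R$, with strict inequality as soon as $R>0$. For (ii), the recursion relation rewrites as $\bar Q - \tfrac12 = (Q - \tfrac12)[2W(\lambda) - 1]$. Since the domain is $Q \in [\tfrac12,1]$ we have $Q - \tfrac12 \geq 0$, and since $W(\lambda) \leq 1$ the bracket satisfies $2W(\lambda)-1 \leq 1$; combining, $\bar Q - \tfrac12 \leq Q - \tfrac12$, whence $\bar Q \leq Q$.

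No step here is genuinely hard: once $V$ and $W$ are identified as expectations, both (i) and (ii) reduce to $\langle g\rangle_{p_\lambda}\leq \|g\|_\infty$ for $g\in\{\cos,\cos^2\}$. The only (mild) subtlety is that the recursion in (ii) implicitly also needs the \emph{complementary} bound $W(\lambda) \geq \tfrac12$, in order to guarantee that $\bar Q$ stays in the codomain $[\tfrac12,1]$ stipulated in Definition~\ref{def:TKnd}. Using $\cos^2\theta = \tfrac12 + \tfrac12\cos(2\theta)$, this reduces to
\begin{equation*}
W(\lambda) - \tfrac12 = \tfrac{1}{2}\,\langle\cos(2\theta)\rangle_{p_\lambda}
= \tfrac{1}{4\pi I_0(\lambda)}\int_0^{2\pi}\ddd\theta\,\cos(2\theta)\,\eee^{\lambda\cos\theta} \geq 0,
\end{equation*}
which is the classical non-negativity of the modified Bessel function of integer order $2$, visible from the manifestly non-negative series $\sum_{k\geq 0}(\lambda/2)^{2k+2}/[k!(k+2)!]$. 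This is the one place where I would pause to record a short justification; everything else is one line from \eqref{eq:RVrel}.
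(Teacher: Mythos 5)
Your proof is correct and follows essentially the same route as the paper's: both read $\bar R \le R$ off $V(\lambda)\in[0,1)$ and $\bar Q\le Q$ off $W(\lambda)\in[\tfrac12,1)$ applied to the relations in \eqref{eq:RVrel}. Your explicit verification that $W(\lambda)\ge\tfrac12$ (via the non-negative Bessel series) is a small addition the paper states without proof, but the underlying argument is the same.
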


\begin{proof}
(i) We have 
\begin{equation}
\bar{R} = R\,V\big(2\sqrt{Q}K\bar{R}\big).
\end{equation}
But $V(\sqrt{Q}K\bar{R})\in [0,1)$, and so $\bar{R}\leq R$.\\
(ii) We have
\begin{equation}
\bar{Q} - \tfrac{1}{2} = (Q-\tfrac{1}{2})\,\big[2W\big(2\sqrt{Q}K\bar{R}\big)-1\big].
\end{equation}
But $W(2\sqrt{Q}K\bar{R})\in [\frac{1}{2},1)$, and so $\bar{Q}\leq Q$. In fact, since both 
$V(2\sqrt{Q}K\bar{R})$ and $W(2\sqrt{Q}K\bar{R})$ are $<1$, both maps are strictly decreasing 
until $R=0$ and $Q = \tfrac12$ are hit, respectively.
\end{proof}


\subsection{Renormalization}
\label{sec:nodisorder}

Recall \eqref{eq:Zdef}. To prove Theorems~\ref{thm:classesnd} we need the following lemma.

\begin{lemma}
\label{lem:Zprop}
The map $\lambda \mapsto \log I_{0}(\lambda)$ is analytic, strictly increasing and strictly convex 
on $(0,\infty)$, with
\begin{equation}
\label{eq:Zlims}
I_{0}(\lambda) = 1 + \tfrac14\lambda^2\,[1 + O(\lambda^2)], 
\quad \lambda \downarrow 0,
\qquad I_{0}(\lambda) = \frac{\eee^{\lambda}}{\sqrt{2\pi\lambda}}\,[1 + O(\lambda^{-1})], 
\quad \lambda\to\infty.
\end{equation}
\qed
\end{lemma}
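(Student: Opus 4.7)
The plan is to identify $(\log I_0)'$ and $(\log I_0)''$ with moments of the von Mises density $p_\lambda$, which immediately yields monotonicity and convexity, and then to treat the two asymptotics as routine Taylor expansion (near $0$) and Laplace's method (near $\infty$).

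First I would note that $I_0(\lambda) = \frac{1}{2\pi}\int_0^{2\pi} e^{\lambda\cos\phi}\,\ddd\phi$ is the integral of an entire function in $\lambda$ that is also continuous in $\phi$, so differentiation under the integral sign shows $I_0$ is entire. Moreover $I_0(\lambda) \geq 1 > 0$ for all $\lambda \geq 0$, so $\log I_0$ is analytic on a neighbourhood of $(0,\infty)$. Differentiating under the integral and comparing with \eqref{eq:VWdef} gives
\begin{equation*}
(\log I_0)'(\lambda) = \frac{I_0'(\lambda)}{I_0(\lambda)}
= \frac{\frac{1}{2\pi}\int_0^{2\pi}\cos\phi\,\eee^{\lambda\cos\phi}\,\ddd\phi}{I_0(\lambda)} = V(\lambda),
\end{equation*}
and a second differentiation yields
\begin{equation*}
(\log I_0)''(\lambda) = W(\lambda) - V(\lambda)^2 = \mathrm{Var}_{p_\lambda}(\cos\phi).
\end{equation*}
For $\lambda > 0$ the density $p_\lambda$ is strictly positive on $[0,2\pi)$ and $\phi\mapsto\cos\phi$ is non-constant, so this variance is strictly positive. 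Hence $\log I_0$ is strictly convex, which in turn forces $V = (\log I_0)'$ to be strictly increasing, and since $V(0)=0$ we get $V(\lambda)>0$ for $\lambda>0$, giving strict monotonicity of $\log I_0$.

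For the small-$\lambda$ expansion I would just Taylor-expand the exponential inside the integral: $\eee^{\lambda\cos\phi} = 1 + \lambda\cos\phi + \tfrac12\lambda^2\cos^2\phi + \tfrac16\lambda^3\cos^3\phi + O(\lambda^4)$; the odd moments of $\cos\phi$ over $[0,2\pi]$ vanish while $\frac{1}{2\pi}\int_0^{2\pi}\cos^2\phi\,\ddd\phi = \tfrac12$, producing $I_0(\lambda) = 1 + \tfrac14\lambda^2 + O(\lambda^4)$, which is the first asymptotic.

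The main work is the large-$\lambda$ asymptotic, where I would apply Laplace's method at the maxima $\phi = 0, 2\pi$ of $\cos\phi$. Writing $\cos\phi = 1 - \phi^2/2 + O(\phi^4)$ on a shrinking window $|\phi| \leq \lambda^{-2/5}$ (after exploiting $2\pi$-periodicity to symmetrize around $\phi=0$), the standard Gaussian integral gives
\begin{equation*}
\frac{1}{2\pi}\int_{-\lambda^{-2/5}}^{\lambda^{-2/5}} \eee^{\lambda(1-\phi^2/2)}\,\ddd\phi
= \frac{\eee^\lambda}{\sqrt{2\pi\lambda}}\,[1+O(\lambda^{-1})],
\end{equation*}
while on the complement $|\phi-\pi|$ is bounded below by a constant so $\cos\phi \leq 1-c$ and that contribution is exponentially smaller than $\eee^\lambda/\sqrt\lambda$ and hence absorbed into the $O(\lambda^{-1})$ error. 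The potential obstacle here is bookkeeping the error term at the claimed order $O(\lambda^{-1})$ rather than something weaker; this is handled by retaining the quartic term in $\cos\phi$, whose contribution after Gaussian integration is precisely of order $\lambda^{-1}$ times the leading term, while all remaining errors in the Laplace expansion are $O(\lambda^{-2})$. Combining the two pieces produces the stated expansion of $I_0(\lambda)$ at infinity.
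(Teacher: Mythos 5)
Your proof is correct and takes essentially the same route as the paper's: both establish strict convexity by identifying $(\log I_0)''$ with a positive quantity — you phrase it as $\mathrm{Var}_{p_\lambda}(\cos\phi) = W(\lambda)-V(\lambda)^2$, while the paper writes out the same positivity as $I_2 I_0 - I_1^2 > 0$ via symmetrization of the double integral — and both then read off strict monotonicity from $V(0)=0$, and derive the two asymptotics by Taylor expansion near $0$ and Laplace's (saddle-point) method near $\infty$.
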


\begin{proof}
Analyticity is immediate from \eqref{eq:Zdef}. Strict convexity follows because the 
numerator of $[\log I_{0}(\lambda)]''$ equals 
\begin{equation}
\begin{aligned}
I_{2}(\lambda)I_{0}(\lambda)-I_{1}(\lambda)I_{1}(\lambda) 
&= \frac{1}{2\pi}\int_{0}^{2\pi}\ddd\phi\int_{0}^{2\pi}\ddd\psi\,
[\cos^{2}\phi - \cos\phi\cos\psi]\,\eee^{\lambda(\cos\phi+\cos\psi)}\\
&= \frac{1}{2\pi}\int_{0}^{2\pi}\ddd\phi\int_{0}^{2\pi}\ddd\psi\,
[\cos\phi - \cos\psi]^2\,\eee^{\lambda(\cos\phi+\cos\psi)} > 0,
\end{aligned}
\end{equation}
where we symmetrize the integrand. Since $\log I_{0}(0)=0$, $\log I_{0}(\lambda)>0$ for 
$\lambda>0$ and $\lim_{\lambda\to\infty} \log I_{0}(\lambda) = \infty$, the strict monotonicity 
follows. The asymptotics in \eqref{eq:Zlims} is easily deduced from \eqref{eq:Zdef} via a 
saddle point computation. 
\end{proof}

\noindent
Since $V=I_{1}/I_{0}=[\log I_{0}]'$, we obtain from \eqref{eq:Zlims} and convexity that
\begin{eqnarray}
\label{VWscal1}
V(\lambda) &\sim& \tfrac12\lambda, \qquad \lambda \downarrow 0,\\
\label{VWscal2}
1-V(\lambda) &\sim& \frac{1}{2 \lambda}, \qquad \lambda \to \infty.
\end{eqnarray}
This limiting behaviour of $V(\lambda)$ inspires the choice of bounding functions in the 
next lemma.

\begin{lemma}
\label{lem:Vplusminus}
$V^{+}(\lambda) \geq V(\lambda) \geq V^{-}(\lambda)$ for all $\lambda \in (0,\infty)$ with
(see Fig.~{\rm \ref{fig:bounding}})
\begin{equation}
\label{eq:Vbdsdef}
\begin{aligned}
V^{+}(\lambda) &= \frac{2\lambda}{1+2\lambda},\\ 
V^{-}(\lambda) &= \frac{\tfrac12\lambda}{1+\tfrac12\lambda}.
\end{aligned}
\end{equation}
\qed
\end{lemma}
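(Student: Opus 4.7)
The proof proceeds via a Sturm-type comparison argument based on the Riccati ODE satisfied by $V$. First I would derive this ODE: integrating $\frac{\ddd}{\ddd\theta}[\sin\theta\,\eee^{\lambda\cos\theta}]$ over $[0, 2\pi]$ yields $I_{1}(\lambda) = \lambda(I_{0}(\lambda) - I_{2}(\lambda))$, hence $W(\lambda) = 1 - V(\lambda)/\lambda$. Differentiating the identity $V\cdot I_{0} = I_{1}$ and using $I_{0}' = I_{1}$, $I_{1}' = I_{2}$ gives
\begin{equation*}
V'(\lambda) = W(\lambda) - V(\lambda)^{2} = 1 - \frac{V(\lambda)}{\lambda} - V(\lambda)^{2}, \qquad \lambda > 0.
\end{equation*}

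Next I would verify by direct algebraic computation that $V^{+}$ is a strict super-solution and $V^{-}$ a strict sub-solution of this Riccati equation:
\begin{equation*}
(V^{+})'(\lambda) - \Bigl[1 - \tfrac{V^{+}(\lambda)}{\lambda} - V^{+}(\lambda)^{2}\Bigr] = \frac{3}{(1+2\lambda)^{2}} > 0,
\end{equation*}
\begin{equation*}
(V^{-})'(\lambda) - \Bigl[1 - \tfrac{V^{-}(\lambda)}{\lambda} - V^{-}(\lambda)^{2}\Bigr] = -\frac{3\lambda}{(2+\lambda)^{2}} < 0.
\end{equation*}

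Set $g^{+} := V^{+} - V$ and $g^{-} := V - V^{-}$; both vanish at $\lambda = 0$. From the power series $I_{0}(\lambda) = 1 + \lambda^{2}/4 + \cdots$ and $I_{1}(\lambda) = \lambda/2 + \lambda^{3}/16 + \cdots$ one reads off $V(\lambda) = \lambda/2 - \lambda^{3}/16 + O(\lambda^{5})$, and expanding the rational functions gives $V^{+}(\lambda) = 2\lambda + O(\lambda^{2})$ and $V^{-}(\lambda) = \lambda/2 - \lambda^{2}/4 + O(\lambda^{3})$. Hence $g^{+}(\lambda) \sim 3\lambda/2$ and $g^{-}(\lambda) \sim \lambda^{2}/4$ as $\lambda \downarrow 0$, so each is strictly positive on some right-neighborhood of $0$. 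If $g^{-}$ had a smallest positive zero $\lambda_{*}$, then $(g^{-})'(\lambda_{*}) \leq 0$ (since $g^{-}$ drops from positive values to $0$); however at $\lambda_{*}$ we have $V(\lambda_{*}) = V^{-}(\lambda_{*})$, so substituting the Riccati equation and using the strict sub-solution inequality gives $(g^{-})'(\lambda_{*}) = 3\lambda_{*}/(2+\lambda_{*})^{2} > 0$, a contradiction. The analogous argument, now using the strict super-solution inequality, yields $g^{+} > 0$ on $(0, \infty)$.

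The main subtlety is the singular term $V/\lambda$ in the Riccati equation at $\lambda = 0$, which obstructs a direct application of the standard initial-value comparison principle; the power-series expansion is precisely what is needed to bypass this, by providing a right-neighborhood of $0$ on which $g^{\pm} > 0$ and hence on which the Sturm-type contradiction can be launched. As a bonus, the upper bound admits a slick purely algebraic proof: the identity $I_{1} = \lambda(I_{0} - I_{2})$ simplifies $V^{+} - V$ to $\lambda(I_{0} - 2I_{1} + I_{2})/[(1+2\lambda)I_{0}]$, whose numerator equals $\lambda\cdot\frac{1}{2\pi}\int_{0}^{2\pi}(1-\cos\theta)^{2}\,\eee^{\lambda\cos\theta}\,\ddd\theta \geq 0$; the corresponding algebraic attempt for the lower bound reduces to $\int(1-\cos\theta)(1+2\cos\theta)\,\eee^{\lambda\cos\theta}\,\ddd\theta \geq 0$, whose integrand changes sign, so a pointwise argument fails there, confirming that the ODE route is genuinely needed.
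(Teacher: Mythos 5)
Your proof is correct and takes a genuinely different route from the paper's. The paper outsources the hard content to the literature: it cites Segura's bound $V(\lambda) < \lambda\big/\big(\tfrac12 + \sqrt{\tfrac14+\lambda^2}\,\big)$ and the Laforgia--Natalini bound $V(\lambda) > \big(-1+\sqrt{1+\lambda^2}\,\big)\big/\lambda$, and then shows by a two-line algebraic rearrangement that $V^{+}$ and $V^{-}$ are respectively looser upper and lower bounds than these. Your argument is instead self-contained: you derive the Riccati ODE $V' = 1 - V/\lambda - V^{2}$ from $I_0'=I_1$, $I_1'=I_2$ together with the integration-by-parts identity $I_1 = \lambda(I_0 - I_2)$ (this is the same identity the paper uses implicitly at \eqref{eq:pltalt} to show $K(1-q)=\tfrac12$), verify by direct computation that $V^{+}$ and $V^{-}$ are strict super- and sub-solutions, bypass the singular coefficient $V/\lambda$ at $\lambda=0$ with the power-series expansions $V^{+}-V\sim\tfrac32\lambda$ and $V-V^{-}\sim\tfrac14\lambda^{2}$, and then run a first-crossing contradiction. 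All of these computations, as well as the ``bonus'' pointwise identity $V^{+}-V = \lambda(I_0 - 2I_1 + I_2)\big/[(1+2\lambda)I_0]$ with $I_0 - 2I_1 + I_2 = \tfrac{1}{2\pi}\int_0^{2\pi}(1-\cos\theta)^{2}\,\eee^{\lambda\cos\theta}\,\ddd\theta > 0$, check out. The trade-off: the paper's proof is shorter on the page but depends on external Bessel-ratio inequalities, whereas yours is longer but elementary, and the Riccati viewpoint makes the specific M\"obius form $\lambda \mapsto c\lambda/(1+c\lambda)$ of the comparison functions look natural rather than ad hoc. One small overstatement in your last sentence: the failure of a pointwise positivity argument for the lower bound does not show that ``the ODE route is genuinely needed'' --- the paper's appeal to Laforgia--Natalini is itself a non-ODE proof of that bound.
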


\begin{proof}
Segura~\cite[Theorem 1]{S11} shows that
\begin{equation}
V(\lambda) < V_*^{+}(\lambda) 
= \frac{\lambda}{\tfrac12 + \sqrt{(\tfrac12)^{2} + \lambda^{2}}}, \qquad \lambda>0.
\end{equation} 
Since $\lambda < \sqrt{(\tfrac12)^2 + \lambda^{2}}$, it follows that $V_*^{+}(\lambda)<V^{+}
(\lambda)$. Laforgia and Natalini~\cite[Theorem 1.1]{L10} show that
\begin{equation}
V(\lambda) > V_*^{-}(\lambda) = \frac{-1 + \sqrt{\lambda^{2} + 1}}{\lambda}, \qquad \lambda>0.
\end{equation} 
Abbreviate $\eta=\sqrt{\lambda^{2}+1}$. Then $\lambda = \sqrt{(\eta-1)(\eta+1)}$, and we can 
write
\begin{equation}
V_*^{-}(\lambda) = \sqrt{\frac{\eta-1}{\eta+1}} =  \frac{\lambda}{\eta+1} = \frac{\lambda}{2+(\eta-1)}.
\end{equation}
Since $\lambda>\eta-1$, it follows that $V_*^{-}(\lambda)>V^{-}(\lambda)$.
\end{proof}

\begin{figure}[htbp]
\centering
\includegraphics[scale=0.8]{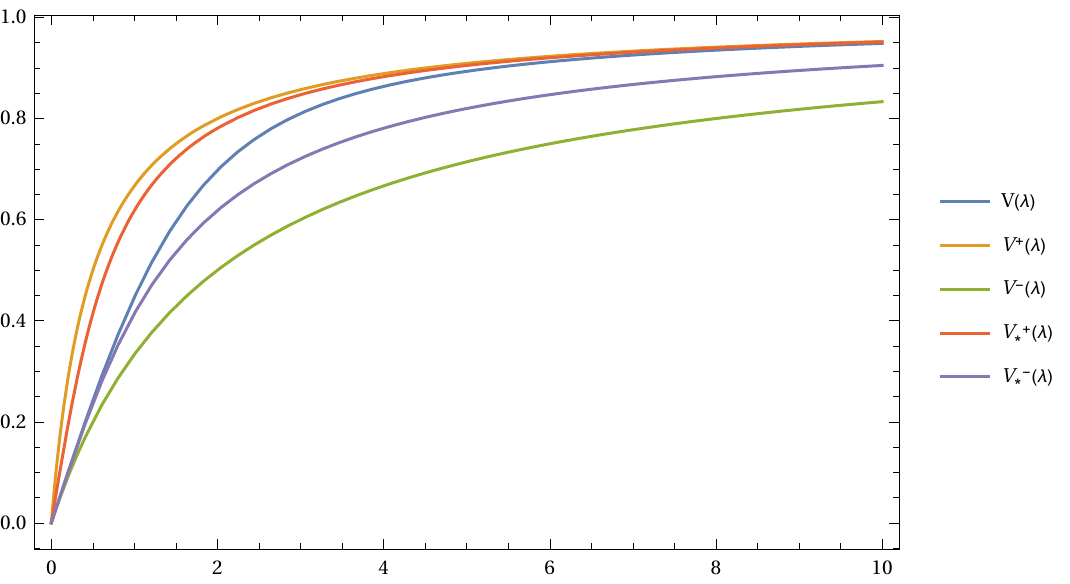}
\caption{Plots of the tighter bounds in the proof of Lemma~\ref{lem:Vplusminus} 
and the looser bounds needed for the proof of Theorem~\ref{thm:classesnd}.}
\label{fig:bounding}
\end{figure}

\noindent
Note that both $V^{+}$ and $V^{-}$ are strictly increasing and concave on $(0,\infty)$, 
which guarantees the uniqueness and non-triviality of the solution to the consistency 
relation in the first line of \eqref{eq:RVrel} when we replace $V(\lambda)$ by either 
$V^{+}(\lambda)$ or $V^{-}(\lambda)$. 

In the sequel we write $V,W,R_k,Q_k$ instead of $V_{\delta_0},W_{\delta_0},R^{[k]},Q^{[k]}$ 
to lighten the notation. We know that $(R_k)_{k\in\N_0}$ is the solution of the sequence of 
consistency 
relations 
\begin{equation}
\label{eq:conrelseq}
R_{k+1} = R_k V\big(2\sqrt{Q_k} K_{k+1}R_{k+1}\big), \qquad k \in \N_0.
\end{equation}
This requires as input the sequence $(Q_k)_{k\in\N_0}$, which is obtained from the sequence 
of recursion relations 
\begin{equation}
Q_{k+1}-\tfrac12 = (Q_k-\tfrac12)\big[2W\big(2\sqrt{Q_k} K_{k+1}R_{k+1}\big)-1\big].  
\end{equation}
By using that $Q_k \in [\frac12,1]$ for all $k\in\N_0$, we can remove $Q_k$ from \eqref{eq:conrelseq}
at the cost of doing estimates. Namely, let $(R_{k}^{+})_{k\in\N_0}$ and $(R_{k}^{-})_{k\in\N_0}$ denote 
the solutions of the sequence of consistency relations 
\begin{equation}
\label{eq:conrelseqbds}
\begin{array}{lll}
R^+_{k+1} &= R_k V^+\big(2K_{k+1}R^+_{k+1}\big), &k \in \N_0,\\
R^-_{k+1} &= R_k V^-\big(2\sqrt{\tfrac12}K_{k+1}R^-_{k+1}\big), &k \in \N_0.
\end{array}
\end{equation}

\begin{lemma}
\label{lem:sand}
$R_k^{+} \geq R_k \geq R^{-}_k$ for all $k\in\N$. \qed
\end{lemma}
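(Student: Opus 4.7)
My plan is to prove both bounds simultaneously by induction on $k$, exploiting the concavity of $V^\pm$ together with the a priori two-sided bound $Q_k\in[\tfrac12,1]$ on the auxiliary order parameter. The base case $k=0$ is immediate from $R_0=R_0^+=R_0^-=1$.

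For the inductive step, I would assume $R_k^-\le R_k\le R_k^+$ and feed the range $Q_k\in[\tfrac12,1]$ into the true consistency relation \eqref{eq:conrelseq}. The chain
\begin{equation*}
\sqrt{2}\,K_{k+1}R_{k+1}\;\le\;2\sqrt{Q_k}\,K_{k+1}R_{k+1}\;\le\;2\,K_{k+1}R_{k+1},
\end{equation*}
combined with the monotonicity of $V$ and Lemma~\ref{lem:Vplusminus}, sandwiches the true update between its $V^\pm$ counterparts:
\begin{equation*}
V^-\!\big(\sqrt{2}\,K_{k+1}R_{k+1}\big)\;\le\;V\!\big(2\sqrt{Q_k}\,K_{k+1}R_{k+1}\big)\;\le\;V^+\!\big(2\,K_{k+1}R_{k+1}\big).
\end{equation*}
Multiplying through by $R_k$ and then replacing $R_k$ on the upper (resp.\ lower) side by $R_k^+$ (resp.\ $R_k^-$) via the inductive hypothesis gives the one-sided inequalities
\begin{equation*}
R_k^- V^-\!\big(\sqrt{2}\,K_{k+1}R_{k+1}\big)\;\le\;R_{k+1}\;\le\;R_k^+ V^+\!\big(2\,K_{k+1}R_{k+1}\big).
\end{equation*}

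The final, and most substantive, step is to turn these inequalities into the fixed-point comparison $R_{k+1}^-\le R_{k+1}\le R_{k+1}^+$. For the upper bound, set $F(x)=R_k^+V^+(2K_{k+1}x)$ and $g(x)=F(x)-x$. Since $V^+$ is strictly increasing, strictly concave and bounded above by $1$ (see \eqref{eq:Vbdsdef}), the same holds for $F$, and therefore $g$ is strictly concave with $g(0)=0$ and $g(x)\to-\infty$ as $x\to\infty$. Hence $g$ has at most one positive zero; when present, that zero is precisely $R_{k+1}^+$, and concavity forces $g\ge0$ on $[0,R_{k+1}^+]$ and $g\le0$ on $[R_{k+1}^+,\infty)$. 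The upper-side inequality above reads $g(R_{k+1})\ge0$, so $R_{k+1}\le R_{k+1}^+$. The symmetric argument, with $F^-(x)=R_k^-V^-(\sqrt{2}K_{k+1}x)$, produces $R_{k+1}\ge R_{k+1}^-$. The degenerate subcases where the relevant $R_{k+1}^\pm$ vanishes (no positive fixed point exists) are handled by direct inspection, since then the corresponding inequality becomes trivial.

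The only real technical ingredient is this concavity-based fixed-point comparison, which is already implicit in the uniqueness remark following Lemma~\ref{lem:Vplusminus}; the remainder is bookkeeping around the sandwich generated by $Q_k\in[\tfrac12,1]$ and Lemma~\ref{lem:Vplusminus}.
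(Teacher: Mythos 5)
Your core technical ingredient is correct: using $Q_k\in[\tfrac12,1]$ to sandwich the argument of $V$, invoking Lemma~\ref{lem:Vplusminus}, and then converting the resulting one-sided inequality into a fixed-point comparison via the concavity of $V^\pm$ is exactly the mechanism the paper uses, and you spell out the concavity step more carefully than the paper's terse ``Because $V^+$ is concave, it follows that $R_{k+1}\le R^+_{k+1}$.'' However, you have misread the definition in \eqref{eq:conrelseqbds}: the right-hand sides there contain the \emph{true} $R_k$, not $R_k^\pm$. That is, $R_{k+1}^\pm$ is the solution of a single consistency relation with the true $R_k$ as coefficient, not the $(k+1)$-th element of a self-referential recursion. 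This reading is forced by the downstream use in Theorem~\ref{thm:critcasend}, where the exact identity $R_\ell - R_{\ell+1}^+ = \tfrac{1}{4K_{\ell+1}}$, combined with $R_{\ell+1}\le R_{\ell+1}^+$, produces the step-wise bound $R_\ell - R_{\ell+1}\ge \tfrac{1}{4K_{\ell+1}}$; with your self-referential reading you would only get $\tilde R_\ell^+ - R_{\ell+1}\ge \tfrac{1}{4K_{\ell+1}}$, which points the wrong way (since $R_\ell\le\tilde R_\ell^+$) and does not yield \eqref{eq:synbds}. The practical consequence is that the paper's proof needs no induction at all: at each fixed $k$, the chain $R_{k+1}=R_kV\bigl(2\sqrt{Q_k}K_{k+1}R_{k+1}\bigr)\le R_kV^+\bigl(2K_{k+1}R_{k+1}\bigr)$, together with the concavity fixed-point comparison, gives $R_{k+1}\le R_{k+1}^+$ in one step (and symmetrically on the lower side). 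Your inductive replacement of $R_k$ by $R_k^+$ is therefore both unnecessary and a mild weakening, since it compares $R_{k+1}$ against a larger object than the lemma actually asserts.
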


\begin{proof}
If we replace $V(\lambda)$ by $V^{+}(\lambda)$ (or $V^{-}(\lambda)$) in the consistency 
relation for $R_{k+1}$ given by \eqref{eq:conrelseq}, then the new solution $R_{k+1}^{+}$ 
(or $R^{-}_{k+1}$) is larger (or smaller) than $R_{k+1}$. Indeed, we have
\begin{equation}
\label{eq:Rbd}
R_{k+1} = R_{k}V(2K_{k+1}R_{k+1}\sqrt{Q_{k}}) \leq R_{k}V^{+}(2K_{k+1}R_{k+1}).
\end{equation}
Because $V^+$ is concave, it follows from \eqref{eq:Rbd} and the first line of \eqref{eq:conrelseqbds}
that $R_{k+1} \leq R^+_{k+1}$.
\end{proof}

We are now ready to prove Theorems~\ref{thm:classesnd}--\ref{thm:critcasend}.

\begin{proof}
From the first lines of \eqref{eq:Vbdsdef} and \eqref{eq:conrelseqbds} we deduce
\begin{equation}
R_k > \frac{1}{4K_{k+1}} \quad \Longleftrightarrow \quad R_{k+1}^+>0 
\quad \Longrightarrow \quad 
R_{k} - R_{k+1}^{+} = \frac{1}{4K_{k+1}}.
\end{equation}
Hence, with the help of Lemma~\ref{lem:sand}, we get
\begin{equation}
R_k > \frac{1}{4K_{k+1}} \quad \Longrightarrow \quad 
R_{k} - R_{k+1} \geq \frac{1}{4K_{k+1}}.
\end{equation}
Iteration gives (recall that $R_0=1$) 
\begin{equation}
1- R_{k} \geq \min\left\{1, \sum_{\ell=1}^{k}\frac{1}{4K_{\ell}}\right\}.
\end{equation}
As soon as the sum in the right-hand side is $\geq 1$, we know that $R_{k}=0$. This 
gives us the criterion for universality class (1) in Theorem~\ref{thm:classesnd}. Similarly, 
from the second lines of \eqref{eq:Vbdsdef} and \eqref{eq:conrelseqbds} we deduce 
\begin{equation}
\label{eq:Rbdalt}
R_k > \frac{2\sqrt{2}}{K_{k+1}} \quad \Longleftrightarrow \quad R_{k+1}^->0 
\quad \Longrightarrow \quad R_{k} - R_{k+1}^{-} = \frac{\sqrt{2}}{K_{k+1}}.
\end{equation}
Hence, with the help of Lemma~\ref{lem:sand}, we get 
\begin{equation}
R_k > \frac{\sqrt{2}}{K_{k+1}} \quad \Longrightarrow \quad
R_{k} - R_{k+1} \leq  \frac{\sqrt{2}}{K_{k+1}}.
\end{equation}
Iteration gives
\begin{equation}
1- R_{k} \leq \max\left\{1,\sum_{\ell=1}^{k} \frac{\sqrt{2}}{K_{\ell}}\right\}.
\end{equation}
As soon as the sum in the right-hand side is $<1$, we know that $R_{k}>0$. This gives us 
the criterion for universality class (3) in Theorem~\ref{thm:classesnd}. 

In universality classes (2) and (3) we have $R_{k}^+ \geq R_{k}>0$ for $k\in\N$, and 
hence
\begin{equation}
\label{eq:R1}
R_{k}-R_{\infty} = \sum_{\ell \geq k} (R_{\ell}-R_{\ell+1}) 
\geq \sum_{\ell \geq k} (R_{\ell}-R_{\ell+1}^+)
= \sum_{\ell \geq k}\frac{1}{4K_{\ell+1}}, \qquad k \in \N_0.
\end{equation}
In universality class (1), on the other hand, we have $R_{k}^+ \geq R_{k}>0$ for $1 \leq k < k_*$ 
and $R_k=0$ for $k \geq k_*$, and hence
\begin{equation}
\label{eq:R2}
R_{k} - R_{k_*-1} = \sum_{\ell=k}^{k_*-2} (R_{\ell}-R_{\ell+1})
\geq \sum_{\ell=k}^{k_*-2} (R_{\ell}-R_{\ell+1}^+)
= \sum_{\ell=k}^{k_*-2} \frac{1}{4K_{\ell+1}}, \qquad 0 \leq k \leq k_*-2.
\end{equation}
Finally, with no assumption on $(R_k)_{k\in\N}$, we have
\begin{equation}
\label{eq:R3}
R_{k} - R_{\infty} = \sum_{\ell \geq k} (R_{\ell}-R_{\ell+1}) 
\leq \sum_{\ell \geq k} (R_{\ell}-R_{\ell+1}^-) 
\leq \sum_{\ell \geq k}\frac{\sqrt{2}}{K_{\ell+1}},
\end{equation}
where the last inequality follows from \eqref{eq:Rbdalt}. The bounds in \eqref{eq:R1}--\eqref{eq:R3} 
yields the sandwich in Theorem~\ref{thm:critcasend}.
\end{proof}

\begin{remark}
{\rm In the proof of Theorem~\ref{thm:classesnd}--\ref{thm:critcasend} we exploited the fact that 
$Q_k \in [\tfrac12,1]$ to get estimates that involve a consistency relation in only $R_k$. In principle 
we can improve these estimates by exploring what effect $Q_{k}$ has on $R_{k}$. Namely, in 
analogy with Lemma~\ref{lem:Vplusminus}, we have $W^{+}(\lambda) \geq W(\lambda) \geq 
W^{-}(\lambda)$ for all $\lambda \in (0,\infty)$ with (see Fig.~\ref{fig:sandw})
\begin{equation}
W^{+}(\lambda) = \frac{1+\lambda}{2+\lambda}, 
\qquad W^{-}(\lambda) = \frac{1 - \lambda + \lambda^{2}}{2 + \lambda^{2}}.
\end{equation}
This allows for better control on $Q_k$ and hence better control on $R_k$. However, the formulas
are cumbersome to work with and do not lead to a sharp condition anyway.} \qed
\end{remark} 

\begin{figure}[H]
\vspace{0.2cm}
\centering
\includegraphics[scale=0.8]{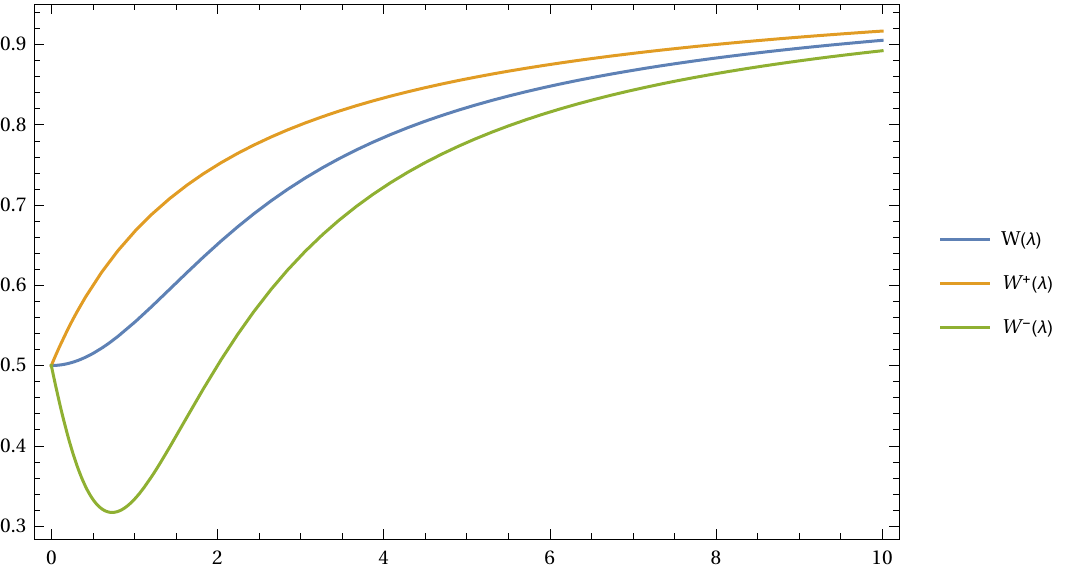}
\caption{Bounding functions for $W(\lambda)$.}
\label{fig:sandw}
\end{figure}


\appendix


\section{Numerical analysis}
\label{app:numerics}

In this appendix we numerically compute the iterates of the renormalization map in \eqref{eq:TKmap}
for two specific choices of $(K_k)_{k\in\N}$, belonging to universality classes (1) and (3), respectively. 

In Fig.~\ref{fig:uniclass1} we show an example in universality class (1): $K_{k} = \frac{3}{2\log 2}
\log(k+1)$. Synchronization is lost at level $k = 4$. When we calculate the sum that appears in 
our sufficient criterion for universality class (1), stated in Theorem~\ref{thm:classesnd}, up to 
level $k=4$, we find that 
\begin{align}
\sum_{k=1}^{4}\frac{2\log 2}{3 \log(k+1)} = 1.70774.
\end{align}
This does not exceed $4$, which shows that our sufficient criterion is not tight. It only gives 
us an upper bound for the level above which synchronization is lost for sure (recall 
\eqref{eq:univ1lb2}), although it may be lost earlier.

\begin{figure}[H]
\centering
\includegraphics[scale=0.8]{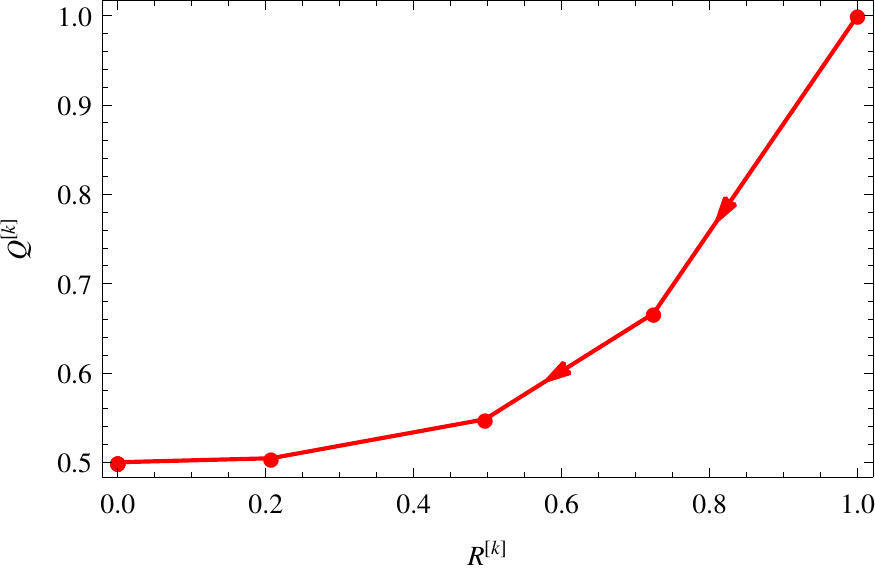}
\includegraphics[scale=0.8]{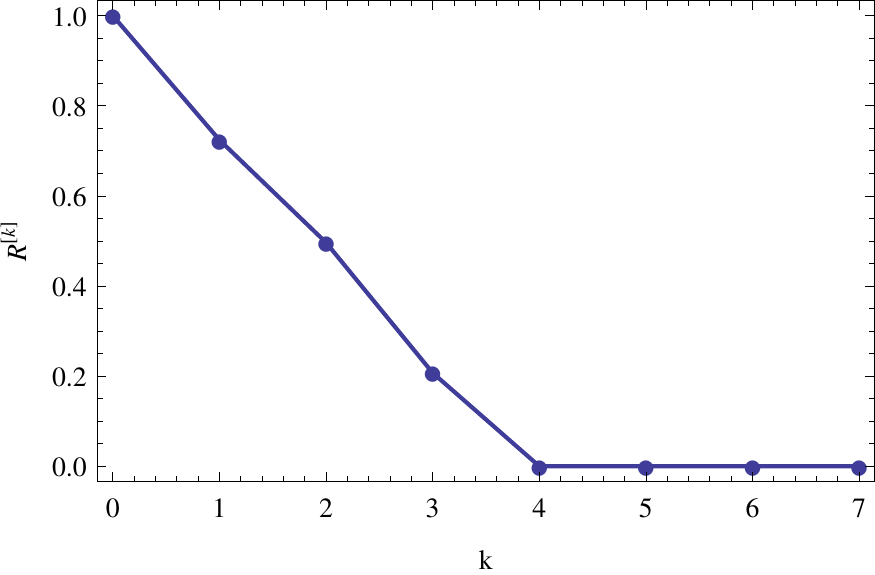}
\caption{A plot of the renormalization map $(R^{[k]}, Q^{[k]})$ for $k=0,\ldots,7$ (left) and 
the corresponding values of $R^{[k]}$ (right) for the choice $K_{k} = \frac{3}{2\log 2}\log(k+1)$.}
\label{fig:uniclass1}
\end{figure}

In Fig.~\ref{fig:uniclass3} we show an example of universality class (3), where $K_k = 4\,
\eee^{k}$. There is synchronization at all levels. To check our sufficient criterion 
we calculate the sum
\begin{align}
\sum_{k\in\N} \frac{1}{4\,\eee^k} \approx 0.145494 < \frac{1}{\sqrt{2}} \approx 0.7071.
\end{align}

\begin{figure}[H]
\centering
\includegraphics[scale=0.8]{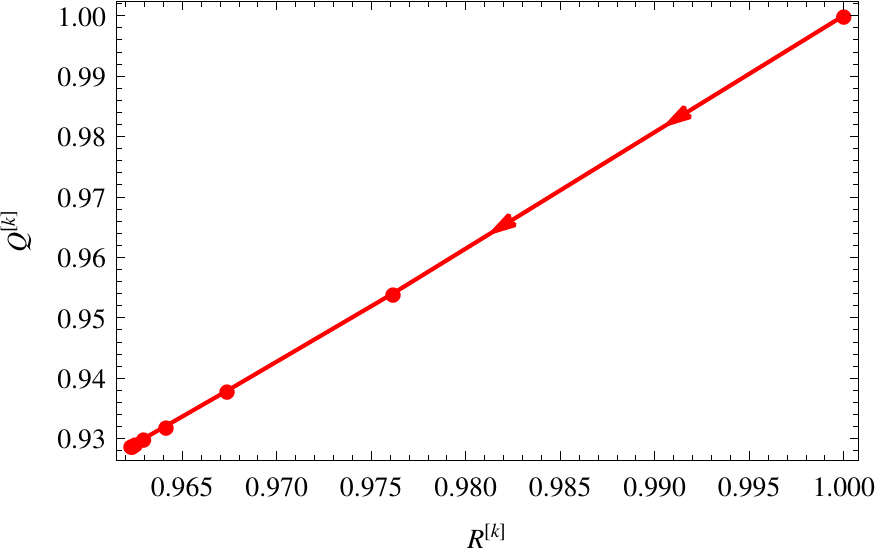}
\includegraphics[scale=0.8]{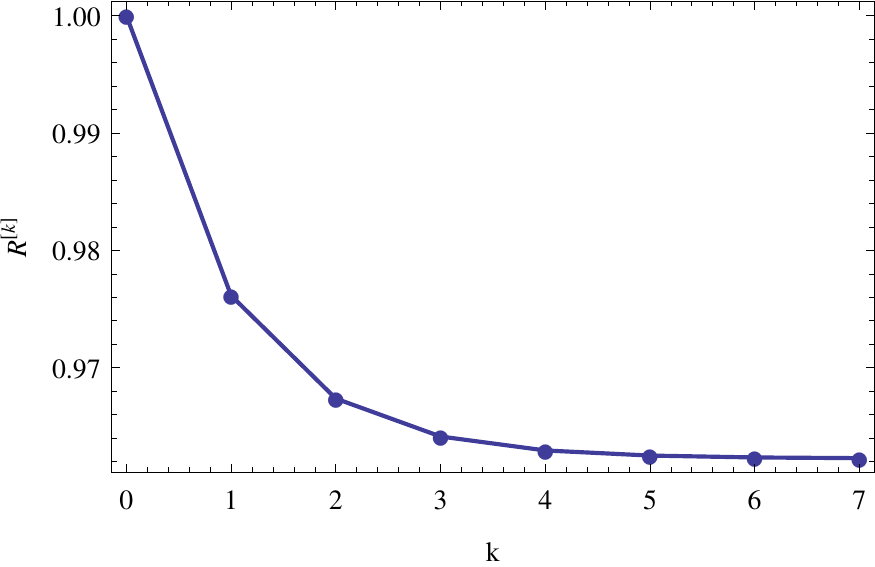}
\caption{A plot of the renormalization map $(R^{[k]}, Q^{[k]})$ for $k=0,\ldots,7$ (left) and the 
corresponding values of $R^{[k]}$ (right) for the choice $K_{k} = 4\,\eee^k$.}
\label{fig:uniclass3}
\end{figure}

To find a sequence $(K_{k})_{k\in\N}$ for universality class (2) is difficult because we do not 
know the precise criterion for criticality. An artificial way of producing such a sequence is to 
calculate the critical interaction strength at each hierarchical level and taking the next interaction 
strength to be $1$ larger.



\end{document}